\newtheorem{thm}{Theorem}[section]
\newtheorem{lem}[thm]{Lemma}
\newtheorem{prop}[thm]{Proposition}
\theoremstyle{definition}
\newtheorem{defn}[thm]{Definition}
\DeclareMathOperator*{\argmax}{arg\,max}
\title{A Model of Market Making and Price Impact}
\author{Angad Singh\footnote{Department of Mathematics, California Institute of Technology. Email: assingh@caltech.edu}}
\begin{document}
\maketitle

\begin{abstract}
Traders constantly consider the price impact associated with changing their positions. This paper seeks to understand how price impact emerges from the quoting strategies of market makers. To this end, market making is modeled as a dynamic auction using the mathematical framework of Stochastic Differential Games. In Nash Equilibrium, the market makers' quoting strategies generate a price impact function that is of the same form as the celebrated Almgren-Chriss model. The key insight is that price impact is the mechanism through which market makers earn profits while matching their books. As such, price impact is an essential feature of markets where flow is intermediated by market makers. 
\end{abstract}

\section{Introduction}
Execution costs are an important issue to consider when implementing trading strategies. In electronic markets traders typically use sophisticated algorithms to sequentially place small orders that minimize trading costs. Many trading costs come in the form of market impact, which can be understood by looking at the following stylized snapshot of a limit order book (LOB). 
\begin{figure}[h!]
\centering
\includegraphics[scale=0.5]{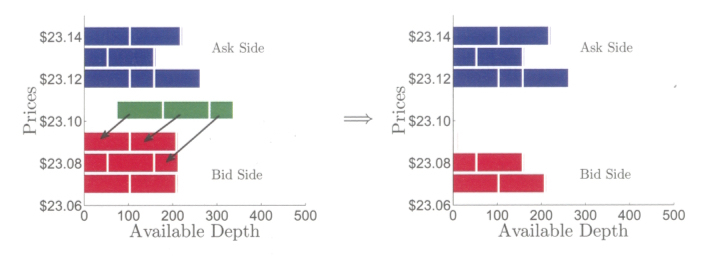}
\caption{A stylized limit order book \cite{carteaAlgorithmicHighFrequencyTrading2015}}
\end{figure}

In Figure 1, the blue and red orders are outstanding limit orders, which correspond to outstanding quotes to trade by market participants. The ask side consists of quotes to sell, and the bid side consists of quotes to buy. The green order is a market order, which in this case is to sell. A market order specifies a quantity to buy or sell and is executed against the best outstanding quotes. In general, limit order are thought of as providing liquidity, as they make it possible for others to trade. They are usually placed by market makers. Market orders are thought of as taking liquidity, as their placement requires someone else to also trade. 

The market order in this figure suffers from two important forms of market impact. Firstly, because the order is somewhat large, it cannot all be executed at the best bid price (at the touch), and so the average price received per share in the order is strictly smaller than \$23.09. This is called walking the book, or temporary price impact; the larger the order the worse the price per share received. 

The second form of market impact is more subtle. Note that on the right side of Figure 1, which is the LOB immediately after the market order executes, the shape of the book is different. More specifically, the bid-ask spread is wider (the difference between the best available buy and sell prices) and the midprice is lower (the average of the best buy and sell prices). Typically market makers will change their limit orders right after this, and the shape of the book will further change. On average though, the midprice will be lower/higher after a (large) sell/buy market order. This is known as permanent price impact.

A common mathematical framework to think about such price impact is known as the Almgren-Chriss model. A trader is assumed to sell at a continuous trading rate $q_t$ over a time interval $[0, T]$. A positive (negative) trading rate corresponds to selling (buying) $q_tdt$ shares in the time instant $[t, t+dt]$. The trader's transaction prices are then given by 
\begin{equation*}
p_t = D_t - \Gamma\int_0^t q_sds - \Lambda q_t.
\end{equation*} 
Here $\Gamma > 0$ corresponds to permanent price impact, as it is the downward pressure induced on the price by the accumulation of sell orders on $[0,t)$. $\Lambda > 0$ corresponds to temporary price impact, as it is the downward pressure induced on the price by the size of the current order. The process $\{D_t\}$ can be thought of as the fundamental value, or unaffected midprice. It is what the midprice would have been in the absence of the trader's selling, and it is often modeled as an arithmetic Brownian Motion. 

This paper seeks to understand how a price impact function such as the one above would emerge from the quoting strategies of market makers, as they are the ones placing the orders against which the trader above is executing. The paper takes as given an exogenous stream of market orders hitting the limit order book, and then consider an auction between market makers to clear this flow. Quoting between the market makers is modeled as submitting demand functions to trade, i.e. announcing a set of acceptable price quantity pairs. 

The paper presents a static one period model (Section 2) as well as a dynamic continuous time model (Section 3). The one period model transparently highlights the tradeoffs faced by the market makers, as well as the mechanism linking price impact to their profits. The dynamic model allows for a more realistic model where liquidity traders' flow mean reverts in the long run. This means that market makers are able to match their books over time, and on average have flat exposures. In the short run, however, they have one-sided exposures, and thus are exposed to fluctuations in fundamental value. Price impact is the mechanism through which market makers ensure that they don't lose money while taking on these short term exposures. 

All in all, the models in this paper deliver the following three insights regarding market making and price impact. Firstly, in order to be profitable, a market maker should make sure that the following heuristic is true for her trades:
\begin{equation*}
sign(trade)(P^{trade} - P^{after}) < 0.
\end{equation*}
When a market maker buys, $sign(trade) > 0$, she should make sure to do so at a price slightly lower than what (she rationally thinks) the midprice will be right after the trade, and vice versa. The reason is that the market maker is hoping to make the opposite trade and sell right after, and in order for this to be profitable the inequality above must hold (on average). This is discussed in more detail when presenting equation (28) below.

Secondly, in order to implement a quoting strategy that achieves the inequality above, the market maker should proceed as follows. At any given moment, the market maker can look at her outstanding quotes and compute the function $P(q)$, which is the trading price she will get if she ends up trading a quantity of $q$. P(q) should reflect expectations of future prices conditional on the fact that a trade of size $q$ is occurring, and this should hold for every $q$. This is necessary to make sure the market maker makes money (on average) regardless of which of her quotes end up executing today. A quoting strategy achieving this can be implemented only if orders have market impact, i.e. if transaction prices depends on order sizes. This is discussed in more detail when presenting Proposition 5.4 below.  

Finally, given this theoretical foundation of price impact from the perspective of market makers, we can provide insight on two common execution cost measures used in practice. The first is effective cost, measured heuristically when selling by 
\begin{equation*}
P^{before} - P^{trade}.
\end{equation*}
The second is realized cost, measured heuristically when selling by
\begin{equation*}
P^{after} - P^{trade}.
\end{equation*}
The first measure looks at the entire price impact of the trade, and the second measure looks only at the subsequent price reversal. In general the first measure is larger due to the long lasting impact of the current trade on the price \cite{pedersenEfficientlyInefficientHow2015}. The theory in this paper supports this, as in the model the second measure differs from the first because of market makers expectations of when they will eventually be able to match their book. Furthermore, the theory suggests that the measures also differ because market making is not a perfectly competitive activity. It's certainly true that there are barriers to entry (e.g. technological and regulatory) in the market making industry, and it would be an interesting empirical exercise to the test the hypothesis that this drives the wedge between the two measures. 

\subsection{Related Literature}
The contents of this paper relate to three strands of existing academic literature. The first is the theoretical economics literature on market microstructure. The second is the mathematical finance literature on optimal execution. The third is the finance literature on asset pricing. 

The literature on market microstructure is fairly classical and considers various one-period equilibrium auction models. Thorough reviews are given in \cite{oharaMarketMicrostructureTheory2011a} and \cite{foucaultMarketLiquidityTheory2013}. The demand schedule auction used in this thesis was first introduced in \cite{wilsonAuctionsShares1979}. \cite{kyleInformedSpeculationImperfect1989} uses the demand schedule auction to study price impact as a consequence of adverse selection. The contents of section 2 are similar to these papers, though the specific theorems proven are new.  One major way in which section 2 (and the rest of the paper) deviates from \cite{kyleInformedSpeculationImperfect1989} is that price impact is not a consequence of adverse selection. Instead, price impact compensates market makers for absorbing short term imbalances. This is in line with the idea of liquidity as the price of immediacy, as introduced in \cite{grossmanLiquidityMarketStructure1988}.

The literature on optimal execution considers dynamic partial equilibrium problems for an individual agent trading against a price impact function. This literature was pioneered in \cite{almgrenOptimalExecutionPortfolio2001} and has since been extended in a variety of directions, as reviewed in \cite{carteaAlgorithmicHighFrequencyTrading2015} and \cite{gueantFinancialMathematicsMarket2016}. The optimal response problems for the model in section 3 is very much analogous to the optimization problems in this literature. However, this literature always takes the market price impact function as exogenously given, whereas in this paper the price impact function is endogenous. 

Asset pricing is the standard framework used in the academic finance community to model market prices. Thorough introductions to the subject are given in \cite{cochraneAssetPricing2005} and \cite{backAssetPricingPortfolio2017}. This literature does not explicitly model auctions, as the market microstructure literature does, but instead uses a general equilibrium approach. Furthermore, trading costs are usually ignored in these models. \cite{garleanuDynamicPortfolioChoice2016} and \cite{bouchardEquilibriumReturnsTransaction2018} do consider trading costs in asset pricing settings, but these costs are exogenously specified. Also, the trading costs in these models are not in the form of market price impact, but instead are transaction costs paid on top of market prices. Many of the modeling techniques in this paper are inspired by these papers, and also by the continuous time CARA-Normal frameworks used in \cite{campbellSmartMoneyNoise1993} and \cite{barberisXCAPMExtrapolativeCapital2015}  

\section{The One Period Model}
This section formulates and discusses some one period models of auctions for shares\footnote{For our purposes a share is simply a unit of an infinitely divisible good.}. The type of auction considered is typically referred to as a conditional uniform price auction, or as the demand schedule game. At the beginning of the auction, each agent submits a demand schedule $D: \mathbb{R} \to \mathbb{R}$, which is a commitment to purchase $D(p)$ shares if the price per share is $p$. The auctioneer then observes all the schedules, aggregates them into a total demand schedule, and chooses a price such that the auction clears. Agents receive shares based on their individual demand schedules, and finally they derive utility from the payoff associated to each share. 

This type of auction is interesting because it's a good way to model agents trading through a limit order book. At a very high level, the decision to place market and limit orders amounts to deciding how aggressively to trade. A trader with a great sense of urgency might submit a large market order, which executes immediately, but is subject to walking the book. On the other hand, a more patient trader might scatter a few limit orders throughout the book, which may or may not execute, but will do so at a good price if they do. Finally, note then when an order executes it must, by definition, execute against another order placed by another agent. Thus agents are not just deciding how aggressively to push their own trading agendas, but also how aggressively to absorb the order flow of other traders.

Returning to the auction, note that selecting a demand schedule also amounts to deciding how aggressively to trade. A vertical demand schedule is equivalent to submitting a market order, as it demands a fixed quantity regardless of the price received. Tilting the slope of the demand schedule downwards is much like placing limit orders. It gives the agent the opportunity to trade at a good price, but this is not a guarantee because there must be another agent willing to accept the other side of this trade. Finally, from a game theoretic point of view, when an agent takes others' schedules as given she is a monopsonist facing a supply curve. The suppliers are of course the other agents in the game, so forming an optimal response involves pushing ones own trading agenda while also deciding how much of others' order flow to absorb. 

Hence the decision-making process of an agent in the demand schedule game captures many features of trading on a limit order book.  An interesting property of the demand schedule game is that it features multiple equilibria when there is no uncertainty in the quantity to be cleared at the auction. From our point of view this is a realistic feature, as the quantity to be cleared on the exchange at any instant is uncertain. This first section below formulates and discusses this result. The second section presents a version of the model where agents have initial inventories, thereby formulating it as a Bayesian game. This version of the model connects better with the dynamic models to follow. 

\subsection{The Model}
There are $N \geq 2$ market makers bidding for a total of $S$ outstanding shares. Each share will, after the auction, provide a random payoff of $\tilde{\mu} \sim \mathcal{N}(\mu, \sigma^2)$. Market makers are assumed to have CARA preferences over payoffs, with risk aversion parameter $\gamma > 0$. Thus if a market maker purchases $q$ shares in the auction for a price of $p$ per share, her expected utility is $\mathbb{E}[-e^{-\gamma q(\tilde{\mu} - p)}]$. We will assume that market makers can only submit affine and decreasing demand schedules\footnote{The affine assumption is not strictly necessary, as an optimal response to a profile of affine schedules is also given by an affine schedule.}. So a demand schedule is characterized by a pair $(a, b) \in \mathbb{R} \times (0, \infty)$, corresponding to the commitment to purchase $a-bp$ shares when the price per share is $p$.\footnote{$b = \infty$ corresponds to placing an order that specifies a price but no quantity. Such an order is not possible on an exchange, and thus is not allowed here. Technically we should allow $b = 0$, since it corresponds to placing a market order. However, we only consider symmetric equilibria below, and $b=0$ can trivially never be such an equilibrium. Thus for the sake of exposition $b = 0$ is omitted from the outset.} 

In addition to the $N$ market makers there are liquidity traders who have a random perfectly inelastic demand for $-\tilde{u}$ shares. Thus the total number of shares the $N$ market makers must clear in the auction is $S + \tilde{u}$. By adjusting the mean of $\tilde{u}$ we can simply assume that the total number of shares being auctioned is random and equal to $\tilde{u}$. We assume that $\tilde{u}$ is independent of $\tilde{\mu}$, but make no other assumptions about its distribution.

Given demand schedules $(a^1, b^1), \cdots, (a^N, b^N)$, the auction price $p$ and the shares $q^1, \cdots, q^N$ bought by each market maker are implicitly given by 
\begin{align}
a^n - b^np = q^n \\
q^1 + \cdots + q^N = \tilde{u}.
\end{align}
This describes a game in normal form, and we now proceed to study its Nash equilibria. We will focus only on symmetric equilibria, that is on equilibria where all market makers submit the same demand schedule. Note that in a symmetric equilibrium all market makers purchase the same number of shares in the auction, i.e. $q^1 = \cdots = q^N = \frac{\tilde{u}}{N}$. 

\begin{thm}
Fix exogenous parameters $N \geq 2, \gamma,\sigma>0$ and $\mu \in \mathbb{R}$. \\
If $\tilde{u}$ is degenerate, i.e. $\tilde{u} = u \in \mathbb{R}$ a.s., then there is a one-to-one correspondence between symmetric equilibria and $\lambda > 0$. In equilibrium the price is 
\begin{equation}
p = \mu - \frac{\gamma\sigma^2}{N}u - \frac{\lambda}{N} u. 
\end{equation}
If $\tilde{u}$ is non-degenerate and $N \geq 3$, then there is a unique symmetric equilibrium with price
\begin{equation}
p = \mu - \frac{\gamma\sigma^2}{N}\tilde{u} - \frac{\gamma \sigma^2}{N(N-2)} \tilde{u}.
\end{equation}
If $\tilde{u}$ is non-degenerate and $N = 2$ then a symmetric equilibrium does not exist. 
\end{thm}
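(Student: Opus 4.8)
The plan is to characterize symmetric equilibria through each market maker's best-response problem, exploiting the CARA-Normal structure to reduce her objective to a mean-variance tradeoff. First I would sum the clearing conditions (1)--(2) over the $N$ market makers to write the auction price as $p = \left(\sum_n a^n - \tilde{u}\right)/\sum_n b^n$, and, fixing the schedules of the other $N-1$ players, rewrite market maker $n$'s outcome via the residual supply curve $q^n = \tilde{u} - A^{-n} + B^{-n}p$, where $A^{-n} = \sum_{m\neq n}a^m$ and $B^{-n} = \sum_{m\neq n}b^m > 0$. Since $\tilde{\mu}$ is Gaussian and independent of $\tilde{u}$, conditioning on $\tilde{u}$ and integrating out $\tilde{\mu}$ turns the CARA utility into the negative exponential of the certainty equivalent $q^n(\mu - p) - \tfrac{1}{2}\gamma\sigma^2 (q^n)^2$, so maximizing expected utility amounts to maximizing this concave quadratic along the residual supply for each realization of $\tilde{u}$.

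The key step is the first-order condition. Substituting the residual supply into the certainty equivalent and differentiating in $q^n$ yields the optimality locus
\begin{equation*}
\mu - p = q^n\left(\frac{1}{B^{-n}} + \gamma\sigma^2\right),
\end{equation*}
with strict concavity guaranteeing this is the global maximum. The crucial observation is that this locus is a single downward-sloping demand line in the $(q^n,p)$-plane that does \emph{not} depend on the realization of $\tilde{u}$; varying $\tilde{u}$ merely shifts the residual supply and slides the optimal point along this fixed line.

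From here the two regimes separate cleanly. When $\tilde{u}$ is non-degenerate its support contains at least two points, and distinct realizations shift the residual supply and hence move its intersection with the fixed line, so the optimal point occupies at least two distinct positions on the optimality line; since an affine schedule agreeing with that line at two points must coincide with it everywhere, the best response is forced to be $b^n = \left(1/B^{-n} + \gamma\sigma^2\right)^{-1}$ with intercept $a^n = \mu b^n$. Imposing symmetry ($B^{-n} = (N-1)b$) gives $1/b = 1/((N-1)b) + \gamma\sigma^2$, i.e. $b = (N-2)/((N-1)\gamma\sigma^2)$; this is positive exactly when $N\geq 3$, and substituting into the clearing price reproduces (4), whereas $N=2$ forces $\gamma\sigma^2 = 0$, a contradiction, so no symmetric equilibrium exists. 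When $\tilde{u} = u$ is degenerate the optimality condition must hold only at the single equilibrium point $q^n = u/N$: any affine schedule with $b > 0$ through that point is a best response, so the slope is a free parameter while the intercept is pinned down by clearing. Writing the perceived price impact as $\lambda = 1/((N-1)b)$, the first-order condition evaluated at the equilibrium point gives exactly $p = \mu - \tfrac{\gamma\sigma^2}{N}u - \tfrac{\lambda}{N}u$, recovering (3).

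The main obstacle is handling the degenerate case correctly: one must verify that the breakdown of the ``agree at two prices'' argument is genuine, so that sweeping the slope $b$ while readjusting the intercept really produces a continuum of distinct symmetric equilibria, each a mutual best response, rather than an artifact. I would make this precise by checking that for every $b>0$ the common schedule through $(u/N,p)$ satisfies the single-point first-order condition for all players simultaneously, and that distinct slopes yield distinct prices, which establishes the claimed one-to-one correspondence between symmetric equilibria and $\lambda > 0$.
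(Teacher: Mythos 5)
Your proposal is correct, and its computational core is the same as the paper's: fix the opponents' symmetric schedules, derive the residual supply line, integrate out $\tilde{\mu}$ to reduce CARA utility to the quadratic certainty equivalent $q(\mu - p) - \tfrac{\gamma\sigma^2}{2}q^2$, and optimize pointwise in the realization of $\tilde{u}$. Where you genuinely differ is in how the equilibrium condition is extracted. The paper parametrizes the symmetric profile by $F = a/b$ and $\lambda = 1/(b(N-1))$, reduces the best response to a choice of quantity in $V = \{A + B\tilde{u} : (A,B) \in \mathbb{R}\times(0,1)\}$, and arrives at an almost-sure identity between two affine functions of $\tilde{u}$ (its equation (13)), resolved by coefficient matching: two constraints in the non-degenerate case (forcing $F = \mu$ and $(N-2)\lambda = \gamma\sigma^2$), one scalar constraint in the degenerate case. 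You instead characterize the best-response \emph{schedule} geometrically: the first-order condition $\mu - p = q\left(1/B^{-n} + \gamma\sigma^2\right)$ traces a $\tilde{u}$-independent optimality line, so with two support points the affine best response must coincide with that line, while with one support point only a single incidence condition is imposed. These are two packagings of the same underlying fact (an affine function of $\tilde{u}$ is pinned down by two points), but yours delivers best-response uniqueness before symmetry is imposed and makes the origin of the degenerate multiplicity more transparent, whereas the paper's $(F,\lambda)$-parametrization is the one that scales directly to the inventory model of Theorem 2.2 and to the dynamic model. One small slip at the end: when $u = 0$ all degenerate equilibria have price $\mu$, so injectivity of the correspondence $\lambda \mapsto$ equilibrium should be argued from the strategy profiles themselves (the slope $b$ is part of the profile), not from distinctness of prices; this is a harmless fix, since your construction already assigns a distinct schedule to each $b > 0$.
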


\begin{proof}
Fix a strategy $(a, b) \in \mathbb{R} \times (0, \infty)$ to be played by all but one agent, and consider the optimal response problem faced by the remaining agent. If this agent plays strategy $(\alpha, \beta) \in \mathbb{R} \times (0, \infty)$, then her expected utility will be $\mathbb{E}[-e^{-\gamma q(\tilde{\mu} - p)}]$ with $p$ and $q$ given implicitly by
\begin{align}
q &= \alpha - \beta p \\
\frac{\tilde{u} - q}{N-1} &= a-bp.
\end{align}
Rearranging we obtain
\begin{align}
p &= F - \lambda\tilde{u} + \lambda q \\
q &= \bigg[\frac{\alpha}{1 + \lambda\beta} - \frac{\beta}{1+\lambda\beta}F\bigg] + \frac{\lambda\beta}{1+\lambda\beta}\tilde{u},
\end{align}
where $F := \frac{a}{b}$ and $\lambda := \frac{1}{b(N-1)}$.

We note a few things about these equations. Firstly, the parameters $F$ and $\lambda$ characterize the symmetric profile given by $(a,b)$ and are independent of the remaining agent's demand schedule. Secondly, equation (7) tells us that the price the remaining agent receives is uniquely determined by the number of shares she receives. Thus the agent is indifferent between schedules that lead to the same quantity of shares. Thirdly, from equation (8) we see that choosing $(\alpha, \beta) \in \mathbb{R} \times (0, \infty)$ amounts to choosing to receive the quantity $q \in V$ where
\begin{equation}
V := \{A + B\tilde{u}: (A, B) \in \mathbb{R} \times (0, 1)\}.
\end{equation}
Fourthly, if the agent submits the schedule $(a,b)$ then she receives the quantity $q = \frac{\tilde{u}}{N}$.

All this goes to show that in forming an optimal response, the remaining agent can maximize directly over $q \in V$, with the price given by (7), and in equilibrium the maximum must be attained at $\frac{\tilde{u}}{N}$. Thus symmetric equilibria correspond to $F \in \mathbb{R}$ and $\lambda > 0$ such that 
\begin{equation}
\frac{\tilde{u}}{N} \in \argmax_{q \in V} \mathbb{E}\bigg[-e^{-\gamma q\big(\tilde{\mu} - F + \lambda\tilde{u}- \lambda q\big)}\bigg],
\end{equation}
and in equilibrium the price is 
\begin{equation}
p = F - \frac{N-1}{N}\lambda\tilde{u}.
\end{equation}

Now, note that $\tilde{u}$ is independent of $\tilde{\mu}$ and every $q \in V$ is $\tilde{u}$ measurable. Since the MGF of $\tilde{\mu}$ is $\mathbb{E}[e^{t\tilde{\mu}}] = e^{\mu t + \frac{\sigma^2}{2}t^2}$, we can compute the expectation in (10) for any $q \in V$ as
\begin{align}
\mathbb{E}\Bigg[-e^{-\gamma q\Big(\mu - F + \lambda\tilde{u} - (\lambda + \frac{\gamma \sigma^2}{2}) q\Big)}\Bigg].
\end{align}

For any $F \in \mathbb{R}$ and $\lambda > 0$, the function of $q$ in (12) is strictly concave over the convex set $V$. Thus the argmax in (10) consists of at most one point. Furthermore, for any $F \in \mathbb{R}$, $\lambda > 0$ and $u \in \mathbb{R}$, the much relaxed problem of maximizing the exponent in (12)
\begin{equation*}
\max_{q \in \mathbb{R}} q(\mu - F + \lambda u - \Big(\lambda + \frac{\gamma \sigma^2}{2}\Big)q) 
\end{equation*}
has unique solution $\hat{q} = \frac{\mu - F}{2\lambda + \gamma \sigma^2} + \frac{\lambda}{2\lambda + \gamma \sigma^2}u$.

It follows that (10) holds if and only if
\begin{equation}
\frac{\tilde{u}}{N} = \frac{\mu - F}{2\lambda + \gamma \sigma^2} + \frac{\lambda}{2\lambda + \gamma \sigma^2}\tilde{u}.
\end{equation}
Thus symmetric equilibria correspond to $F \in \mathbb{R}$ and $\lambda > 0$ satisfying (13), and in equilibrium the price is given by (11). 

Now, if $\tilde{u}$ is degenerate and equal to $u \in \mathbb{R}$ a.s., then (13) simply reads 
\begin{equation}
F = \mu - \frac{\gamma\sigma^2}{N}u + \frac{N-2}{N}\lambda u.
\end{equation}
Thus there a one-to-one correspondence between symmetric equilibria and $\lambda > 0$, with $F$ given by (14). The first statement in the theorem follows by using (14) to substitute for $F$ in (11).

Next suppose that $\tilde{u}$ is non-degenerate. Then (14) holds if and only if $F = \mu$ and $\lambda > 0$ satisfies
\begin{equation}
(N-2)\lambda = \gamma\sigma^2.
\end{equation}
If $N = 2$ then no $\lambda > 0$ can satisfy (15), so a symmetric equilibrium does not exist, proving the last statement in the theorem. If $N \geq 3$ then the unique $\lambda > 0$ satisfying (15) is $\lambda = \frac{\gamma\sigma^2}{N-2}$, and so there is a unique symmetric equilibrium. The second statement in the theorem now follows by plugging in $F = \mu$ and $\lambda = \frac{\gamma\sigma^2}{N-2}$ in (11). 
\end{proof}

\subsubsection{Discussion}
The idea in the proof is to characterize symmetric profiles in terms of the parameters $F$ and $\lambda$ of the induced the optimal response problem. The optimal response problem is to choose an expected utility maximizing quantity on the linear supply curve (7), which has intercept $F - \lambda\tilde{u}$ and slope $\lambda > 0$. This is a concave maximization problem with a unique solution. The symmetric equilibrium condition on $F$ and $\lambda$ is that this solution is $\frac{\tilde{u}}{N}$. In the non-degenerate case this condition uniquely determines $F$ and $\lambda$ whereas in the degenerate case it only specifies $F$ as a function of $\lambda$.

What happens in the degenerate case is that the market makers are able to form an agreement to misprice the asset and then take equal shares of the profit. Consider for example the case when $\tilde{u} = 1$, so a unit share is being cleared at the auction. Then (3) states that the equilibrium price can take on any value below $\mu - \frac{\gamma\sigma^2}{N}$, which is the price the asset would trade at in a competitive equilibrium. Thus we see that the asset is being priced relatively low, and since each market maker takes $\frac{1}{N}$ shares, they split the profits equally. 

Since the game is non-cooperative, in order to form an agreement the market makers must have a way to prevent others from taking more than an equal share of the profits. The key point is that all market makers submit entire demand schedules, which specify what the price must be contingent on the quantity the market maker receives. So if one market maker were to take more than $\frac{1}{N}$ shares, some other market makers would receive less than $\frac{1}{N}$ shares, and this would cause the price to move, thus dissuading any one market maker from trying to take extra shares in the first place. 

The amount by which the price would move if a market maker took more then $\frac{1}{N}$ shares is governed by the parameter $\lambda$, which corresponds to the quantity elasticity $b$ of the equilibrium demand schedule. The lower the equilibrium price $p$, the more of an incentive a market maker has to acquire more than $\frac{1}{N}$ shares, and thus the higher $\lambda$ needs to be to prevent the market maker from doing so. (3) says exactly that low equilibrium prices correspond to high values of $\lambda$.

The problem in the degenerate case is that market makers suffer no cost from being quantity elastic, since there will be no surprise trades in equilibrium. Hence $\lambda$ can take on any positive value in equilibrium. In the non-degenerate case, market makers suffer costs from being quantity elastic in equilibrium, because there is uncertainty in the quantity to be cleared. These costs manifest in how the parameter $\lambda$ effects the uncertainty of equilibrium prices. Since market makers care about the uncertainty of prices, this pins down the unique equilibrium value of $\lambda$ as $\frac{\gamma\sigma^2}{N-2}$.

The coefficient of $\tilde{u}$ in (4) is $\frac{N-1}{N(N-2)}\gamma\sigma^2$, which is the price impact of the liquidity traders' order. If the liquidity traders sell $\epsilon$ more shares, so the realization of $\tilde{u}$ is $\epsilon$ higher, then the equilibrium price is $\frac{N-1}{N(N-2)}\gamma\sigma^2\epsilon$ lower. Thus orders walk the book: the larger an order, the lower the transaction price.

The decomposition of price impact into the two terms is motivated by considering the competitive limit as $N \to \infty$ and $\frac{\gamma}{N}$ is held fixed. In the limit the second term vanishes and only the first remains. Thus $\mu - \frac{\gamma\sigma^2}{N}\tilde{u}$ is the competitive benchmark, and the second term is the deviation due to imperfect competition. As in the competitive case, the term $\gamma\sigma^2\frac{\tilde{u}}{N}$ is the risk compensation each market maker requires to take the equilibrium exposure of $\frac{\tilde{u}}{N}$. 

The interpretation is that price impact arises for two reasons in the model. Firstly to make sure market makers are appropriately compensated for bearing risk, and secondly because market makers have market power. The first reason persists even in the competitive limit, and as a result price impact does not vanish in the limit. This will be a recurring theme throughout the paper. 

\subsection{The Model with Inventories}
The continuous time model considered in the next section essentially consists of the auction above at each instant, with the addition of certain state variables that need to be carried from instant to instant. The state variables are the existing inventories of shares that market makers have accumulated from trading in the past. This section introduces these state variables in a static setting as types, thus generalizing the model above to a Bayesian game. 

In addition to forming a tighter connection with the continuous time model to follow, the rephrased model in this section has two other appealing features. Firstly, in the previous subsection the total number of outstanding shares played no distinct role\footnote{In the previous section, the total number of outstanding shares was absorbed into the mean of $\tilde{u}$.} from the liquidity traders' order. This is perhaps counterintuitive, as the liquidity traders' order should have price impact, whereas the total number of outstanding shares should be a fixed component of the price. In this section the total number of outstanding shares will show up as a fixed component of the price. Secondly, in the previous section all market makers purchased the same number of shares in equilibrium. In this section their purchases will be heterogenous.

The model is exactly as before except that each of the $N$ market makers starts out with an existing inventory of $X^n \in \mathbb{R}$ shares. Thus if market maker $n$ purchases $q^n$ shares in the auction for a price of $p$ per share, then her expected utility is 
\begin{equation}
\mathbb{E}\Big[-e^{-\gamma\big((X^n + q^n)\tilde{\mu} - pq^n\big)}\Big].
\end{equation}
The liquidity traders start out with zero shares, and the total number of outstanding shares is $S$, so $\sum_{n = 1}^N X^n = S$.

More formally, we work on a probability space with a single objective probability measure. There are $N+2$ real-valued random variables defined on this probability space: $\tilde{\mu}$, $\tilde{u}$, and $X^1, \cdots, X^N$. There are exogenous constants $\mu, S \in \mathbb{R}$ and $\sigma > 0$ such that $\tilde{\mu} \sim \mathcal{N}(\mu, \sigma^2)$ and $\sum_{n = 1}^N X^n = S$. Furthermore, $\tilde{\mu}$ and $\tilde{u}$ are independent of each other as well as $X^1, \cdots, X^N$.

The type (or private information) of market maker $n$ is $X^n$. A strategy is a measurable function mapping the realization of a market maker's type to a choice of demand schedule. As before demand schedules are restricted to be affine and strictly decreasing, so a strategy for market maker $n$ is a measurable mapping $(a^n, b^n): \mathbb{R} \to \mathbb{R} \times (0, \infty)$, $X \mapsto (a^n(X), b^n(X))$. Given a strategy profile and a realization of $(X^1, \cdots, X^N)$, prices and quantities are determined from (1) and (2) with $a^n = a^n(X^n)$ and $b^n = b^n(X^n)$. 

This completes the description of the model as a Bayesian game. We will be interested in identifying Bayesian Nash equilibria\footnote{What we call a Bayesian Nash equilibrium is sometimes called a strong Bayesian Nash equilibrium. We require market maker $n$ to choose a strategy that maximizes (16) \emph{conditional} on $X^n$ for \emph{every realization} of $X^n$. This is in contrast to the weaker requirement of choosing a strategy that just maximizes (16), which \emph{averages over realizations} of $X^n$.} in this game, but we will focus on equilibria that have a very specific structure. 
\begin{defn}
A strategy $s: \mathbb{R} \to \mathbb{R} \times (0, \infty)$ is called \textbf{linear} if there exist constants $a, \xi \in \mathbb{R}$ and $b \in (0, \infty)$ such that 
\begin{equation*}
s(X) = \big(aX + \xi, b\big) \ \ \ \forall X \in \mathbb{R}.
\end{equation*}
\end{defn}
Our focus will be on linear symmetric equilibria, that is on equilibria where all market makers play the same linear strategy.

\begin{thm}
Fix exogenous parameters $N \geq 3$, $\gamma, \sigma > 0$ and $\mu, S \in \mathbb{R}$. 

If $\tilde{u}$ is non-degenerate then there is a unique linear symmetric equilibrium. In equilibrium, the price is
\begin{equation}
p = \mu - \frac{\gamma\sigma^2}{N}S - \frac{N-1}{N(N-2)}\gamma\sigma^2\tilde{u} 
\end{equation}
and the quantities purchased by each market maker are
\begin{equation}
q^n = -\frac{N-2}{N-1}(X^n - \frac{S}{N}) + \frac{\tilde{u}}{N}.
\end{equation}
\end{thm}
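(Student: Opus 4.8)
The plan is to mirror the proof of Theorem 2.1, reducing each market maker's Bayesian optimal-response problem to a deterministic concave maximization and then imposing the fixed-point (equilibrium) condition. First I would fix a linear strategy $(a,\xi,b)$ played by the other $N-1$ market makers and analyze the problem faced by a single deviator of type $X$. The crucial observation is that, although the deviator does not know the other types $X^m$ individually, the constraint $\sum_m X^m = S$ forces $\sum_{m\neq n}X^m = S - X$, which she \emph{does} know. Hence the residual supply she faces is deterministic up to the random $\tilde u$, and the clearing conditions (1)--(2) again collapse to the affine form
\[
p = F - \lambda\tilde u + \lambda q, \qquad \lambda := \frac{1}{b(N-1)}, \quad F := \lambda a(S - X) + \frac{\xi}{b},
\]
exactly as in (7), the only difference being that the intercept $F$ now depends on the deviator's own type. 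As before, ranging over $(\alpha,\beta)\in\mathbb{R}\times(0,\infty)$ is equivalent to choosing a quantity $q$ in the set $V$ of (9).

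Next I would evaluate the objective (16). Conditioning on $\tilde u$ (hence on $q$) and using that $\tilde\mu\sim\mathcal{N}(\mu,\sigma^2)$ is independent of $(\tilde u, X^1,\dots,X^N)$, the Gaussian MGF turns the expected utility into $\mathbb{E}[-e^{-\gamma\Phi(q)}]$, where
\[
\Phi(q) = \big(\mu - F - \gamma\sigma^2 X + \lambda\tilde u\big)q - \Big(\lambda + \tfrac{\gamma\sigma^2}{2}\Big)q^2 + c(X).
\]
This is the analogue of (12), the only new feature being the inventory term $-\gamma\sigma^2 X q$ in the linear coefficient. For each realization $\tilde u = u$ the exponent is a strictly concave quadratic with unique unconstrained maximizer $\hat q(u) = A^\ast + B^\ast u$, whose slope $B^\ast = \lambda/(2\lambda+\gamma\sigma^2)\in(0,1)$ guarantees $\hat q\in V$; since $q\mapsto -e^{-\gamma\Phi}$ is increasing in $\Phi$ and $\hat q$ maximizes $\Phi$ pointwise, $\hat q$ is the unique maximizer over $V$.

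Then I would impose the equilibrium condition. A direct computation from (1)--(2) under the symmetric profile gives the quantity actually received by market maker $n$ as $q^n = a\big(X^n - \tfrac{S}{N}\big) + \tfrac{\tilde u}{N}$, which is already of the form (18). Equilibrium requires this to coincide with the optimizer $\hat q$ for every realization of $X^n$ and $\tilde u$, i.e. an identity in the two variables $(X,u)$. Matching the coefficient of $u$ reproduces (15), forcing $\lambda = \gamma\sigma^2/(N-2)$ (here $N\geq 3$ is exactly what makes this positive); matching the coefficient of $X$ gives $a = -\tfrac{N-2}{N-1}$; and matching the constant term gives $\xi = \mu/(\lambda(N-1))$. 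These three equations determine $(a,\xi,b)$ uniquely, establishing both existence and uniqueness, after which substituting back into the symmetric clearing price yields (17).

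The main obstacle, and the place where the argument genuinely differs from Theorem 2.1, is the reduction in the first step: one must verify that incomplete information does not actually bite, i.e. that the budget identity $\sum_m X^m = S$ renders the deviator's residual-supply intercept a known function of her own type, so that the Bayesian best response is computed exactly as in the full-information auction. A secondary point to handle with care is the logical status of the coefficient-matching in $X$: because we demand optimality \emph{conditional on every realization} of $X^n$ (the strong notion of equilibrium in the footnote), the identity $q^n = \hat q$ must hold for all $X$ in the range of types, which is what legitimizes equating coefficients of $X$ and thereby pinning down $a$ and $\xi$ separately rather than merely a single linear combination of them.
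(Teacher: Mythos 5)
Your proposal is correct and follows essentially the same route as the paper's proof: fix the others' linear strategy, reduce the deviator's problem to maximization against an affine supply curve whose intercept depends on her own type through $S-X$, use the Gaussian MGF to get a strictly concave quadratic exponent whose unconstrained maximizer has $\tilde u$-slope $\frac{\lambda}{2\lambda+\gamma\sigma^2}\in(0,1)$ and hence lies in $V$, and then match coefficients in $(x,\tilde u)$ using non-degeneracy of $\tilde u$ together with the strong (conditional-on-every-type) equilibrium requirement. The only cosmetic difference is that you fold the paper's parameter $C(S-x)$ into a type-dependent intercept $F(X)$ and solve directly for $(a,\xi,b)$, whereas the paper solves for $(F,C,\lambda)$ and then substitutes back; the resulting equations and values are identical.
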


\begin{proof}
Fix a linear strategy given by $a, \xi \in \mathbb{R}$ and $b \in (0, \infty)$ to be played by all but one agent, and consider the optimal response problem faced by the remaining agent. Suppose the remaining agent plays the strategy $\mathbb{R} \to \mathbb{R} \times (0,\infty), x \mapsto (\alpha(x), \beta(x))$. If the agent's initial inventory is $X$, then her expected utility is $\mathbb{E}[-e^{-\gamma((X+q)\tilde{\mu} - pq)}]$, where $p$ and $q$ are given implicitly by 
\begin{align}
\frac{a}{N-1}(S-X) + \xi - bp &= \frac{\tilde{u} - q}{N-1} \\
\alpha(X) - \beta(X)p &= q.
\end{align}
Rearranging we obtain 
\begin{align}
p &= F + C(S-X) - \lambda\tilde{u}  + \lambda q \\
q &= \bigg[\frac{\alpha(X)}{1+\lambda\beta(X)} - \frac{\beta(X)}{1+\lambda\beta(X)}\Big(F + C(S-X)\Big)\bigg] + \frac{\lambda\beta(X)}{1 + \lambda\beta(X)}\tilde{u},
\end{align}
where $F:= \frac{\xi}{b}$, $C := \frac{a}{b(N-1)}$ and $\lambda := \frac{1}{b(N-1)}$.

We note a few things about theses equations. Firstly, the parameters $F$, $C$ and $\lambda$ characterize the symmetric profile given by $a, \xi$ and $b$, and they are independent of the remaining agent's demand schedule. Secondly, equation (21) tells us that if we hold the remaining agent's initial inventory fixed, then the price the remaining agent receives is uniquely determined by the quantity she trades. Thus the agent is indifferent between demand schedules that lead to the same quantity of shares. Thirdly, from equation (22) we see that choosing a strategy $(\alpha, \beta)$ amounts to choosing functions $A:\mathbb{R} \to \mathbb{R}$ and $B:\mathbb{R} \to (0,1)$ such that the agent's traded quantity is $q = A(X) + B(X)\tilde{u}$. Fourthly, if the agent uses the linear strategy $(aX + \xi, b)$ then her traded quantity is $q = \frac{C}{\lambda}\big(X - \frac{S}{N}\big) + \frac{\tilde{u}}{N}$.

All this goes to show that linear symmetric equilibria correspond to $F, C \in \mathbb{R}$ and $\lambda > 0$ such that
\begin{equation}
\frac{C}{\lambda}\big(x - \frac{S}{N}\big) + \frac{\tilde{u}}{N} \in \argmax_{q \in V} \mathbb{E}\bigg[-e^{-\gamma\Big(x\tilde{\mu} + q\big(\tilde{\mu} - F - C(S-x) + \lambda\tilde{u}- \lambda q\big)\Big)}\bigg] \ \ \ \forall x \in \mathbb{R},
\end{equation}
where $V$ is as in (9). In equilibrium the price is
\begin{equation}
p = F + \frac{N-1}{N}CS- \frac{N-1}{N}\lambda\tilde{u},
\end{equation}
and agents' trades are
\begin{equation}
q^n = \frac{C}{\lambda}\big(X^n - \frac{S}{N}\big) + \frac{\tilde{u}}{N}.
\end{equation}

Now, note that $\tilde{u}$ is independent of $\tilde{\mu}$ and every $q \in V$ is $\tilde{u}$ measurable. Since the MGF of $\tilde{\mu}$ is $\mathbb{E}[e^{t\tilde{\mu}}] = e^{\mu t + \frac{\sigma^2}{2}t^2}$, we can compute the expectation in (23) for any $q \in V$ and $x \in \mathbb{R}$ as
\begin{equation}
e^{-\gamma\mu x} \mathbb{E}\bigg[-e^{-\gamma\Big(q\big(\mu - F - C(S-x) + \lambda\tilde{u}- \lambda q\big) - \frac{\gamma\sigma^2}{2}(x+q)^2\Big)}\bigg].
\end{equation}

For any $F, C, x \in \mathbb{R}$ and $\lambda > 0$, the function of $q$ in (26) is strictly concave over the convex set $V$. Thus the argmax in (23) consists of at most one point. Furthermore, for any $F, C, x \in \mathbb{R}$, $\lambda > 0$ and $u \in \mathbb{R}$, the much relaxed problem of maximizing the exponent inside the expectation in (26)
\begin{equation*}
\max_{q \in \mathbb{R}} q\big(\mu - F - C(S-x) + \lambda\tilde{u}- \lambda q\big) - \frac{\gamma\sigma^2}{2}(x+q)^2
\end{equation*}
has unique solution $\hat{q} = \frac{\mu - F - CS}{2\lambda + \gamma \sigma^2} + \frac{C - \gamma\sigma^2}{2\lambda + \gamma \sigma^2}x + \frac{\lambda}{2\lambda + \gamma \sigma^2}u$.

It follows that (23) holds if and only if
\begin{equation}
\frac{C}{\lambda}\big(x - \frac{S}{N}\big) + \frac{\tilde{u}}{N} = \frac{\mu - F - CS}{2\lambda + \gamma \sigma^2} + \frac{C - \gamma\sigma^2}{2\lambda + \gamma \sigma^2}x + \frac{\lambda}{2\lambda + \gamma \sigma^2}\tilde{u} \ \ \ \forall x \in \mathbb{R}.
\end{equation}
Since $\tilde{u}$ is non-degenerate, (27) holds if and only if $F = \mu, C = -\frac{\gamma\sigma^2}{N-1}$, and $\lambda = \frac{\gamma\sigma^2}{N-2}$. The theorem now follows by plugging these values is (24) and (25).
\end{proof}

\subsubsection{Discussion}
The proof is similar to the one in the previous subsection, with the idea being to characterize symmetric profiles in terms of the parameters $F$, $C$ and $\lambda$ of the induced optimal response problem. The additional parameter $C$ governs how the intercept of the supply curve in the optimal response problem depends on the optimizing agent's initial inventory. More specifically, $C$ captures the dependence of the intercept on the sum of all other agents' inventories, which the optimizing agent can compute by subtracting her own inventory from the total number of outstanding shares, i.e. $S - X$. 

The supply curve represents the prices at which the other agents are willing to clear the joint order of the liquidity traders and the optimizing agent. These prices must depend on the preexisting exposures of the remaining agents, hence the presence of the parameter $C$. For symmetric profiles, the others' exposure can be aggregated instead of considering individual exposures, which greatly simplifies the problem. 

The constant term in the equilibrium price is $\mu - \frac{\gamma\sigma^2}{N}S$, as opposed to just $\mu$ in the previous theorem. Thus there is a constant discount in the price reflecting the total number of outstanding shares. Intuitively this discount appears here because the market makers are already in possession of $S$ shares prior to the auction, whereas in the previous section they initially posses no shares. The coefficient of $\tilde{u}$ in the equilibrium price is $\frac{N-1}{N(N-2)}\gamma\sigma^2$, exactly as in the previous subsection. 

A unified way to write the equilibrium price in the two theorems is in terms of the aggregate inventory of the market makers \emph{after} the auction, denoted $S_{post}$. In the first subsection's model we have $S_{post} = \tilde{u}$ and in the second subsection we have $S_{post} = S + \tilde{u}$. In both cases, the equilibrium price is 
\begin{equation}
p = \mu - \frac{\gamma\sigma^2}{N}S_{post} - \frac{\gamma\sigma^2}{N(N-2)}\tilde{u}.
\end{equation}
The first two terms here are the competitive benchmark, and the last term is the deviation due to imperfect competition. 

At first glance it might seem surprising and counterintuitive that the deviation due to imperfect competition depends on $\tilde{u}$ and not $S_{post}$. For example, if $S_{post} >  0$ and $\tilde{u} < 0$, then the market makers are in aggregate long the asset, but the price is high relative to the competitive benchmark. The deviation due to imperfect competition should always favor the market makers, so one might expect it to make the price low when they are going long and high when they are going short. However this reasoning is flawed because the price in (28) is not the price at which the market makers enter their aggregate position of $S_{post}$. It is merely the price at which the market makers shift their aggregate position from $S$ to $S_{post}$. Said another way, (28) is not the denominator in the market makers' aggregate return, and as such the low/high long/short reasoning does not apply. 

The logic behind (28) is that the price is low when the liquidity traders are selling, $\tilde{u} > 0$, and high when they are buying, $\tilde{u} < 0$. Thus, roughly speaking, the liquidity traders are always "getting ripped off." This can be made more precise by recalling the analogy between the auction and a limit order book. Based on this analogy, the price in (17) can be interpreted as saying that the  is $\mu - \frac{\gamma\sigma^2}{N}S$, and orders walk the book at a rate of $\frac{\gamma\sigma^2}{N(N-2)}$ per share. 

Now, suppose the auction is repeated an instant later (prior to the realization of payoffs). Since the aggregate inventory of the market makers will be $S_{post}$ an instant later, the  will be $\mu - \frac{\gamma\sigma^2}{N}S_{post}$. So, (28) says that if the market makers buy in the first auction, $\tilde{u} > 0$, then they do so at a price lower than the  in the second auction. Similarly if they sell in the first auction, $\tilde{u} < 0$, then they do so at a price higher than the  in the second auction. Thus market makers always trade in the first auction at prices that are favorable relative to the  in the second auction. In particular, if a market maker were to unwind the position acquired in the first auction with a limit order\footnote{Technically a market order could also work in the second auction if it does not suffer too much price impact. The point is to unwind the position in the second auction at a favorable price relative to (28). Since the  in the second auction will be $\mu - \frac{\gamma\sigma^2}{N}S_{post}$, the price will always be favorable if using a limit order, and for a market order it depends on the price impact. In the dynamic model, equilibrium price impact will be constant over time and using market orders will not work for the market makers. However liquidity trader flow will mean revert, so the market makers will eventually be able to unwind their positions using limit orders, and this will earn positive profits.} in the second auction, then the roundtrip trade would earn positive profits. This is the sense in which the deviation due to imperfect competition always favors the market makers. This logic will reappear more clearly below, where we look at a dynamic model and the auction truly is repeated instant after instant. 

From (18) we see that the market makers trade heterogenous quantities of shares in the auction, unlike in the previous section where they all traded $\frac{\tilde{u}}{N}$. Since the market makers must clear $\tilde{u}$ in the auction, it follow that the average number of shares bought by each market maker is $\frac{\tilde{u}}{N}$, but some market makers might buy more and some less. The total initial inventory of the market makers is $S$, so the average inventory held by each market maker is $\frac{S}{N}$. (18) says that the market makers with above average inventories buy less, and those with below average inventories buy more. 

A more precise way to understand the traded quantities is in terms of the Pareto optimality of the inventory distribution before and after the auction. Since all market makers are identical, it would be Pareto optimal for them to hold $\frac{S}{N}$ shares before the auction and $\frac{S_{post}}{N}$ after the auction. Individual inventories after the auction are $X^n_{post} := X^n + q^n$ and from (18) it follows that 
\begin{equation}
X^n_{post} - \frac{S_{post}}{N} = \frac{1}{N-1}(X^n - \frac{S}{N}).
\end{equation}
Thus (29) says that trading in the auction moves inventories closer to efficiency by a factor of $\frac{1}{N-1}$. When $N$ takes its smallest value of $3$, the market makers only move halfway towards efficiency, whereas they move entirely towards efficiency in the limit as $N \to \infty$. Thus imperfect competition among the market makers results in imperfect risk sharing. 

\section{The Dynamic Model}

This section presents a continuous time model of trading. The model is set on an infinite horizon, and each instant in time consists of the auction from above. Traders hold cash and shares, and these fluctuate over time based on the outcomes of the auctions. As in Section 2 there are two types of traders: market makers, endogenous, and liquidity traders, exogenous. 

The first subsection formulates the model as a stochastic differential game, and the second subsection formally defines the equilibrium concept to be considered. The third subsection states the main theorem, which provides a complete closed-form characterization of linear symmetric equilibria. The proof of the theorem is given in Section 4, and an analysis of the equilibrium is given in Section 5. 

\subsection{The Model}
Fix a filtered probability space $(\Omega, \mathcal{F}, \{\mathcal{F}_t\}, \mathbb{P})$ equipped with two independent Brownian motions, $\{B^D_t\}$ and $\{B^{\tilde{S}}_t\}$, and satisfying the usual conditions. We consider a market on an infinite horizon where shares of a zero net supply risky asset are traded for cash. Cash is in infinitely elastic supply, earns no interest, and is the numeraire. The market is populated by two types of traders: $N \in \mathbb{N}$ market makers and a collection of liquidity traders. Each market maker starts out at time $0$ holding $X^n_0$ shares of the asset, $n =1, \cdots, N$. The liquidity traders start out with a collective shareholding of $-S_0$, which must satisfy $S_0 = X^1_0 + \cdots + X^N_0$ since the asset is in zero net supply. 

At each instant in time the traders transact with one another at a uniform price $p_t$. Trading occurs smoothly, meaning that each trader has a trading rate, which is the time derivative of her current shareholdings.\footnote{Shareholdings will also be referred to as inventories in what follows.} Trading rates and the trading price at each instant in time are determined by a demand schedule auction between the traders. Thus at time $t$ each trader submits an affine\footnote{It is actually not necessary to assume that agents can only submit affine schedules, as we will see below. For the class of equilibria we consider, when forming an optimal response an agent can achieve any trading rate via an affine demand schedule. Thus even if agents could submit arbitrary schedules, there would still be an equilibrium where they all submit affine schedules. Of course, there may also be other equilibria where agents submit more exotic schedules.} demand schedule of the form $q = u_t - v_t p$. This is a commitment to trade at rate $u_t - v_t p$ at time $t$ if the trading price is $p$. We denote this demand schedule by $(u_t, v_t)$. The process $\{(u_t, v_t)\}$ is required to be progressively measurable, though we will place more stringent conditions on it below.

The liquidity traders' collective demand schedule at time $t$ is assumed to be of the form $(-\mathcal{N}_t, 0)$. Thus the liquidity traders' demand schedule is a vertical line through the point $-\mathcal{N}_t$, interpreted as a market order to sell $\mathcal{N}_tdt$ shares over the time interval $[t, t + dt]$. The liquidity traders' collective inventory at time $t$ is denoted by $-S_t$, and it follows that $\{S_t\}$ evolves as a consequence of trading according to
\begin{equation}
dS_t = \mathcal{N}_tdt.
\end{equation}
Furthermore, the process $\{\mathcal{N}_t\}$ is exogenously specified according to
\begin{align}
\mathcal{N}_t &= -\phi(S_t - \tilde{S}_t) \\
d\tilde{S}_t &= -\psi\tilde{S}_tdt + \sigma_{\tilde{S}}dB^{\tilde{S}}_t
\end{align}
where $\phi, \psi, \sigma_{\tilde{S}} > 0$.  

The idea behind the dynamics (30)-(32) is that $-\tilde{S}_t$ is the current inventory target of the liquidity traders, and $\phi$ governs the speed with which they trade towards their target. Thus when their current inventory is above the target, $-S_t > -\tilde{S}_t$, the liquidity traders submit market orders to sell, $\mathcal{N}_t > 0$, and vice versa. If $\phi$ is large, the liquidity traders are impatient and submit large market orders to quickly reach their target. If $\phi$ is small, the liquidity traders are patient, submit smaller market orders, and only move slowly towards their target. One can think of $-\tilde{S}_t$ as the inventory the liquidity traders would want to hold if markets were perfectly liquid. Due to illiquidity they cannot instantly acquire this inventory, and instead do so gradually. Note that since $\{\tilde{S}_t\}$ mean reverts about zero, so too does $\{S_t\}$, i.e. we have $\mathbb{E}[S_T|S_t, \tilde{S_t}] \to 0$ as $T \to \infty$ almost surely $\forall t \geq 0$.

The market makers submit demand schedules in order to maximize certain objectives. Denote the market makers' trading rates by $q^n_t$, so that their inventories evolve according to
\begin{equation}
dX^n_t = q^n_tdt.
\end{equation}
If the market makers submit the demand schedule processes $\{(\frac{\alpha^n_t}{\beta^n_t}, \frac{1}{\beta^n_t})\}$, then their trading rates $q^n_t$ and the trading price $p_t$ are determined implicitly by 
\begin{align}
\alpha^n_t - \beta^n_t q^n_t &= p_t \ \ \ \ \ \forall n = 1, \cdots, N \\
q^1_t + \cdots + q^N_t &= \mathcal{N}_t.
\end{align}
Note that the price, trading rate, and inventory processes all depend on the choice of demand schedule processes. This dependence is suppressed in the notation. 

Each market maker has cash holding $M^n_t$ which evolve as a result of trading according to $dM^n_t = -q^n_tp_tdt$. The market makers are also assumed at time $t$ to have a common exogenous valuation of the asset as $D_t$, where
\begin{equation}
dD_t = \mu dt + \sigma_DdB^D_t
\end{equation}
and $\mu,\ \sigma_D > 0$ are fixed constants. Thus each marker maker values her book, consisting of joint holdings in cash and the asset, as $W^n_t = X^n_tD_t + M^n_t$. This is the market maker's wealth, computed by valuing shareholdings at $D_t$.

Each market maker chooses her demand schedule to maximize the objective
\begin{equation}
\mathbb{E}\bigg[\int_0^{\infty}e^{-\rho t}\Big(dW^n_t - \frac{\gamma}{2}d\langle W^n \rangle_t\Big)\bigg],
\end{equation}
where $\rho$, $\gamma > 0$. Here $\langle W^n \rangle_t$ is the quadratic variation of the market makers wealth, and so $d\langle W^n \rangle_t$ can be thought of as the variance of instantaneous wealth changes. Thus the integrand in (37) can be interpreted as a mean-variance utility flow from instantaneous returns, which means that (37) embodies myopic\footnote{Myopic because the utility flow comes from instantaneous returns.} mean-variance preferences over returns. 

One can compute the objective function in (37) as
\begin{equation}
\mathbb{E}\bigg[\int_0^{\infty}e^{-\rho t}\Big(-q^n_t(p_t - D_t) + \mu X^n_t - \frac{\gamma \sigma_D^2}{2}(X^n_t)^2\Big)dt\bigg].
\end{equation}
Thus the market makers want to buy, $q^n_t > 0$, when the price is below their valuation, $p_t - D_t < 0$, and vice versa. Furthermore, they enjoy holding inventory to the extent that valuations grow on average, $\mu > 0$, and they are averse to holding inventory to the extent that valuations are volatile, $\sigma_D^2 > 0$. 

Barring technicalities, this completes the description of the model as a stochastic differential game between the $N$ market makers. Indeed, the control process\footnote{We take as controls the parameters of the inverse demand as opposed to the demand. This simplifies much of the algebra below. We will continue to refer to the controls as demand schedules.} of each market maker is $\{(\alpha^n_t, \beta^n_t)\}$ and the controlled dynamics are described by (30) - (35). The coupled objectives of the market makers are given by (38), the coupling being induced by (34) and (35). The initial conditions for the game are the initial conditions for equations (32), (33), and (36). We are interested in studying the Nash equilibria of this game. 

What remains is to formally specify the equilibrium concept that we will consider for this game. This includes any measurability and integrability conditions that the admissible controls must satisfy, as well as any restrictions on the class of equilibria that will be studied. This is carried out in the next subsection.

\subsection{Equilibrium Concept}
We begin by specifying the admissibility conditions that the market makers' demand schedules must satisfy. Firstly, we need to make sure that the system (34) - (35) can be solved uniquely to define progressively measurable trading rates and prices. Secondly, prices and trading rates must be sufficiently well-behaved so that the (implicit) integrals in (33) and the double integral in (38) converge absolutely. Finally, we will want to place some measurability restrictions on the demand schedule processes in order to reflect the type of information that market makers have access to. This gives rise to the definition of admissible profiles of demand schedules.

\begin{defn}
Given initial conditions $(\vec{x}, \tilde{s}, d) \in \mathbb{R}^N \times \mathbb{R} \times \mathbb{R}$ for $(X^1_0, \cdots, X^N_0), \ \tilde{S}_0$, and $D_0$, we say that the profile of progressively measurable demand schedules $\{(\alpha^1_t, \beta^1_t)\}, \cdots, \{(\alpha^N_t, \beta^N_t)\}$ is \textbf{admissible starting from} $(\vec{x}, \tilde{s}, d)$ if:
\begin{enumerate}
\item $\beta^n_t > 0 \ \forall t \geq 0, \ \forall n = 1, \cdots, N$ almost surely 
\item $\int_0^T |q^n_t| dt < \infty \ \forall T \geq 0, \ \forall n = 1, \cdots, N$ almost surely
\item The double integral (38) converges absolutely
\item $\alpha^n_t, \beta^n_t \in \sigma\Big(\{D_s\}_{0 \leq s \leq t}, \{p_s\}_{0 \leq s < t}, \{X^n_s\}_{0 \leq s \leq t}, S_0\Big) \ \forall t \geq 0, \ \forall n = 1, \cdots N$.
\end{enumerate} 
\end{defn}

The first condition says that market makers may only submit strictly decreasing demand schedules, and it guarantees that the system (34) - (35) can be solved uniquely to define progressively measurable trading rates and prices. The second and third conditions ensure that the integrals in (33) and (38) converge. The fourth condition states that the information a market maker has access to at any moment in time consists of the history of valuations, the history of prices, the history of her own shareholdings, and the initial level of the liquidity traders' shareholdings. This is essentially the information that exchanges provide to traders in reality. 

We are interested in identifying admissible profiles of demand schedules that are Nash equilibria. Given a profile of demand schedules  $\{(\alpha^1_t, \beta^1_t)\}, \cdots, \{(\alpha^N_t, \beta^N_t)\}$ that's admissible starting from $(\vec{x}, \tilde{s}, d)$, denote by $J^n\big(\vec{x}, \tilde{s}, d, \{(\alpha^1_t, \beta^1_t)\}, \cdots, \{(\alpha^N_t, \beta^N_t)\}\big)$  the value of the double integral (38). This is the payoff that market maker $n$ receives when everyone's strategies are $\{(\alpha^1_t, \beta^1_t)\}, \cdots, \{(\alpha^N_t, \beta^N_t)\}$ and the initial conditions are $(X^1_0, \cdots, X^N_0) = \vec{x}, \ \tilde{S}_0 = \tilde{s}$, and $D_0 = d$.

\begin{defn}
Given initial conditions $(\vec{x}, \tilde{s}, d)$, we say that a profile of demand schedules  $\{(\alpha^1_t, \beta^1_t)\}, \cdots, \{(\alpha^N_t, \beta^N_t)\}$ is a \textbf{Nash Equilibrium starting from} $(\vec{x}, \tilde{s}, d)$ if:
\begin{enumerate}
\item The profile is admissible starting from $(\vec{x}, \tilde{s}, d)$
\item For any $n = 1, \cdots, N$, and for any demand schedule process $\{(\alpha_t, \beta_t)\}$ such that\\ $\{(\alpha^1_t, \beta^1_t)\}, \cdots, \{(\alpha^{n-1}_t, \beta^{n-1}_t)\}, \{(\alpha_t, \beta_t)\}, \{(\alpha^{n+1}_t, \beta^{n+1}_t)\}, \cdots, \{(\alpha^N_t, \beta^N_t)\}\big)$ is admissible starting from $(\vec{x}, \tilde{s}, d)$, we have that
\begin{align*}&J^n\big(\vec{x}, \tilde{s}, d, \{(\alpha^1_t, \beta^1_t)\}, \cdots, \{(\alpha^N_t, \beta^N_t)\}\big) \geq \\&J^n\big(\vec{x}, \tilde{s}, d, \{(\alpha^1_t, \beta^1_t)\}, \cdots, \{(\alpha^{n-1}_t, \beta^{n-1}_t)\}, \{(\alpha_t, \beta_t)\}, \{(\alpha^{n+1}_t, \beta^{n+1}_t)\},\\&\ \ \ \ \ \ \ \ \ \ \ \ \ \ \ \ \ \ \ \ \ \ \ \ \ \ \ \ \ \ \ \ \ \ \ \ \ \ \ \ \ \ \ \ \ \ \ \ \ \ \ \ \ \ \ \ \ \ \ \ \ \ \ \ \ \ \ \ \ \ \ \ \ \ \ \ \ \ \cdots, \{(\alpha^N_t, \beta^N_t)\}\big).\end{align*}
\end{enumerate}
\end{defn}

The second condition is the standard condition for a Nash equilibrium. It states that when all the market makers but market maker $n$ play their equilibrium demand schedules, the maximum payoff the $n^{th}$ market maker can earn is if she also plays her equilibrium demand schedule. In other words, when considering the optimal response problem against a profile of equilibrium demand schedules, each market maker finds it optimal to also use her equilibrium demand schedule. 

Unfortunately, identifying all the Nash equilibria in this model is intractable and beyond the scope of this paper. Instead we will focus on a special class of equilibria where all the market makers' demand schedules have a linear and symmetric structure. While this is fairly restrictive, the equilibria seem quite realistic and exhibit interesting dynamics. More specifically, we will only consider equilibria where all the market makers use demand schedules with the same constant slope and with an intercept that is the same linear function of individual state variables. The individual state variables will be $D_t, X^n_t$, and $S_t$. The precise formulation is given in the next definitions.  

\begin{defn}
A profile of demand schedules $\{(\alpha^1_t, \beta^1_t)\}, \cdots, \{(\alpha^N_t, \beta^N_t)\}$ is said to be \textbf{linear symmetric} if $\exists\ a, \lambda, b, c, \xi \in \mathbb{R}$ s.t.
\begin{align}
\alpha^n_t &= aX^n_t + bD_t + cS_t + \xi \\
\beta^n_t &= \lambda
\end{align}
$\forall t \geq 0, \ \forall n = 1, \cdots, N$. 
\end{defn}

\begin{defn}
We say that $a, \lambda, b, c, \xi \in \mathbb{R}$ are a \textbf{linear symmetric Nash equilibrium} if the linear symmetric profile defined by (39) and (40) is a Nash equilibrium starting from any set of initial conditions. 
\end{defn}

\subsection{Equilibrium Characterization}
\begin{thm}
Fix exogenous parameters $N \geq 3$, $\rho, \gamma, \sigma_D, \phi, \psi, \sigma_{\tilde{S}} > 0$ and $\mu \in \mathbb{R}$.
There is a unique linear symmetric Nash equilibrium with price
\begin{equation*}
p_t = D_t + \frac{\mu}{\rho} -\theta\frac{\gamma\sigma_D^2}{N}S_t - \frac{\gamma}{N}\frac{N-1}{N-2}\frac{\rho\sigma_D^2}{(\rho + \psi)(\rho + \phi)}\Big(\frac{1}{\delta} + \frac{1}{\rho}\Big)\mathcal{N}_t
\end{equation*}
and trading rates
\begin{equation*}
q^n_t = -\kappa\Big(X^n_t - \frac{S_t}{N}\Big) + \frac{1}{N}\mathcal{N}_t,
\end{equation*}
where 
\begin{equation*}
\kappa := \rho(N-2)\frac{\rho + \psi}{\rho + \delta}
\end{equation*}
\begin{equation*}
\delta := \sqrt{\rho^2 + 2(N-2)(\rho + \psi)(\rho + \phi)}
\end{equation*}
\begin{equation*}
\theta := \frac{(\rho + \psi + \phi)\delta - \psi\phi}{(\rho + \psi)(\rho + \phi)\delta}.
\end{equation*}
\end{thm}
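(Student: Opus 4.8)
The plan is to reduce each market maker's optimal-response problem to a linear-quadratic (LQ) stochastic control problem, solve it by dynamic programming, and then impose a fixed-point condition forcing the best response to reproduce the assumed profile. First I would fix a linear symmetric profile $(a,\lambda,b,c,\xi)$ played by the other $N-1$ market makers and derive the residual inverse-supply curve faced by the remaining agent. Exactly as in the one-period models, summing the clearing conditions (34)--(35) over the other agents, using $\sum_n X^n_t = S_t$ and $\mathcal N_t = -\phi(S_t-\tilde S_t)$, gives a relation $p_t = F_t + \Lambda q^n_t$ with $\Lambda := \lambda/(N-1)$ and $F_t$ an explicit affine function of the state $(X^n_t, D_t, S_t, \tilde S_t)$. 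Substituting into (38) removes the price and leaves a running reward $-q^n_t(F_t - D_t) - \Lambda (q^n_t)^2 + \mu X^n_t - \tfrac{\gamma\sigma_D^2}{2}(X^n_t)^2$ that is strictly concave in the control $q^n_t$ (as $\Lambda>0$) and affine-quadratic in the state. Since $\{S_t\}$ and $\{\tilde S_t\}$ are exogenous to the agent and $dX^n_t = q^n_t\,dt$, this is a standard infinite-horizon LQ problem with scalar control.

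Next I would solve it through the Hamilton--Jacobi--Bellman equation $\rho V = \max_{q}\{\,r + \mathcal L V\,\}$, where $\mathcal L$ is the generator of the affine dynamics of $(X^n, D, S, \tilde S)$ and its $q$-dependent part is $V_{X^n}q$. Positing a value function quadratic in $(X^n, S, \tilde S)$ with linear and $D$-dependent terms, the first-order condition gives the feedback $q^\ast = \frac{V_{X^n} - (F_t - D_t)}{2\Lambda}$, linear in the state; substituting back and matching the coefficient of each monomial yields a finite algebraic system for the value-function coefficients whose core is a scalar quadratic (Riccati) equation for the coefficient multiplying the own-inventory term.

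Then I would impose the equilibrium fixed point: the feedback $q^\ast(X^n, D, S, \tilde S)$ must coincide, as a function of the state, with the trading rate the remaining agent produces by submitting the profile's own schedule $(a,\lambda,b,c,\xi)$, which (as in the fourth observation of the inventory model) is again affine in $(X^n_t, S_t, \mathcal N_t)$. Equating coefficients converts this into algebraic equations linking $(a,\lambda,b,c,\xi)$ to the value-function coefficients; together with the Riccati equation they determine all five parameters. The positive root of the Riccati equation introduces $\delta = \sqrt{\rho^2 + 2(N-2)(\rho+\psi)(\rho+\phi)}$, from which $\kappa$, $\theta$, the value $b=1$, and the stated price and trading-rate formulas follow by back-substitution; a convenient consistency check is that $Y^n_t := X^n_t - S_t/N$ satisfies $dY^n_t = -\kappa Y^n_t\,dt$, the stable mean reversion toward efficient risk sharing.

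The hard part will be solving the Riccati equation jointly with the fixed-point equations and selecting the correct root --- the one giving $\kappa>0$ and a stable closed loop, so that $\delta$ takes exactly the stated form. I would then discharge the verification step: confirm the candidate value function solves the HJB with $q^\ast$ as maximizer, that the induced feedback is admissible in the sense of the four admissibility conditions (in particular that (38) converges absolutely and the relevant state combinations are measurable with respect to the agent's information), and that a transversality condition $e^{-\rho t}\mathbb E[V_t]\to 0$ holds, so the verification theorem certifies $q^\ast$ as optimal and hence the profile as a Nash equilibrium. Uniqueness follows because each step --- residual supply, LQ solution, fixed point, and root selection --- is forced.
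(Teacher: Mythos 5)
Your proposal follows essentially the same route as the paper's own proof. The paper's Proposition 4.2 performs exactly your reduction (facing a linear symmetric profile, the agent is indifferent among schedules inducing the same trading rate, can implement any admissible rate, and therefore optimizes the rate directly against the residual supply curve $p_t = F_t + \frac{\lambda}{N-1}q_t$), and the paper then proceeds as you do: HJB equation, first-order condition forcing a value function quadratic in the own-inventory variable, coefficient matching to an algebraic system whose core is a quadratic (Riccati) equation for $a$, root selection, and a final verification-theorem step.

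The one substantive weakness is your root-selection criterion, and it matters for uniqueness. The quadratic gives $a = \pm\frac{(N-1)\gamma\sigma_D^2}{\delta}$, and you discard the positive root because it yields $\kappa < 0$, i.e.\ an unstable closed loop. But closed-loop stability is not part of the definition of a linear symmetric Nash equilibrium, so instability by itself cannot exclude that root; what excludes it is admissibility. The paper's Lemma 4.1 shows that a linear symmetric profile is admissible (in particular, that the discounted objective (38) converges absolutely) if and only if $\lambda > 0$ and $\frac{a}{\lambda} < \frac{\rho}{2}$, and from the Riccati relation $\frac{2a}{\lambda} = \rho + \frac{(N-1)\gamma\sigma_D^2}{a}$ the positive root gives $\frac{2a}{\lambda} = \rho + \delta > \rho$ (inadmissible: the objective diverges), while the negative root gives $\frac{2a}{\lambda} = \rho - \delta < \rho$. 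So the exclusion is forced by integrability of the objective, not by stability; since your plan already lists the admissibility conditions in the verification step, the fix is to apply that same bound to rule out the spurious root. A second, more technical point you gloss over: for the uniqueness (``only if'') half one must know that an equilibrium value function is smooth enough to satisfy the HJB equation and that the equilibrium Markov control attains the Hamiltonian supremum; the paper gets this by computing $V$ explicitly along the candidate control and proving a separate lemma (its appendix), whereas a quadratic ansatz by itself only delivers the existence direction.
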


\section{Proof of Theorem 3.5}

Before proceeding we prove the following lemma, which states the for linear symmetric profiles the admissibility\footnote{Admissibility for a linear symmetric profile means that (9) and (10) are admissible given \emph{any} initial conditions.} conditions manifest in simple constraints on the parameters $a$ and $\lambda$. One of the key points of the lemma is that under a linear symmetric profile market makers can infer $S_t$ from the history of prices and valuations. The market makers are thus able to implement the demand schedules (39) and (40) given their individual information sets.  

\begin{lem}
A linear symmetric profile is admissible if and only if $\lambda > 0$ and $\frac{a}{\lambda} < \frac{\rho}{2}$.
\end{lem}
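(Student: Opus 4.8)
The plan is to check the four admissibility conditions of Definition 3.1 for the linear symmetric profile (39)--(40) one at a time and to isolate which of them constrains the parameters. Condition 1 is just $\beta^n_t = \lambda > 0$, hence equivalent to $\lambda > 0$ and nothing more. The substance is then to show that Condition 4 (measurability) holds automatically once $\lambda > 0$, that Condition 2 is free, and that Condition 3 (absolute convergence of the objective) is precisely what forces $\frac{a}{\lambda} < \frac{\rho}{2}$.

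The computation that drives everything is to solve the auction (34)--(35) under the profile. Since $\beta^n_t \equiv \lambda$, summing $q^n_t = (\alpha^n_t - p_t)/\lambda$, using (35) and $\sum_n X^n_t = S_t$, gives the closed-form price
\[ p_t = \frac{a+Nc}{N}S_t + bD_t + \xi - \frac{\lambda}{N}\mathcal{N}_t, \]
and substituting back yields the individual trading rate
\[ q^n_t = \frac{a}{\lambda}\Big(X^n_t - \frac{S_t}{N}\Big) + \frac{1}{N}\mathcal{N}_t. \]
Setting $Y^n_t := X^n_t - S_t/N$ and using $dX^n_t = q^n_t\,dt$, $dS_t = \mathcal{N}_t\,dt$, the $\mathcal{N}_t$ terms cancel and $Y^n_t$ solves $\dot Y^n_t = \frac{a}{\lambda}Y^n_t$, so $Y^n_t = Y^n_0\,e^{(a/\lambda)t}$ is a deterministic exponential.

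For Condition 4 I would recover $S_t$ from the observable history: inserting $\mathcal{N}_t = \dot S_t$ into the price formula rearranges to the linear ODE $\dot S_t = \frac{a+Nc}{\lambda}S_t + \frac{N}{\lambda}(bD_t + \xi - p_t)$, whose variation-of-constants solution expresses $S_t$ as an explicit functional of $S_0$, $\{D_s\}_{s\le t}$ and $\{p_s\}_{s\le t}$; continuity of $S$ then gives measurability with respect to $\{p_s\}_{s<t}$, matching Condition 4. Thus $\alpha^n_t = aX^n_t + bD_t + cS_t + \xi$ lies in the required $\sigma$-algebra for any parameters, so Condition 4 imposes no constraint — this ODE inversion is the conceptual point the lemma flags. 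Condition 2 is then immediate, since $q^n_t = \frac{a}{\lambda}Y^n_t + \frac{1}{N}\mathcal{N}_t$ is almost surely continuous and hence integrable on each $[0,T]$.

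The crux is Condition 3. I would first observe that $(S_t,\tilde S_t)$ is a stable linear diffusion (drift eigenvalues $-\phi,-\psi<0$), so $\mathbb{E}[S_t^2]$ and $\mathbb{E}[\mathcal{N}_t^2]$ are bounded, while $D_t = D_0 + \mu t + \sigma_D B^D_t$ has only polynomially growing moments. Decomposing $X^n_t = Y^n_t + S_t/N$ and substituting into (38), the one genuinely exponential factor is $Y^n_t \sim e^{(a/\lambda)t}$. The dominant term is $-\frac{\gamma\sigma_D^2}{2}(X^n_t)^2$, whose expectation behaves like $(Y^n_0)^2 e^{2(a/\lambda)t}$ up to lower order, so $\int_0^\infty e^{-\rho t}\mathbb{E}[(X^n_t)^2]\,dt$ converges iff $2\frac{a}{\lambda}<\rho$. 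The remaining terms ($\mu X^n_t$ and $q^n_t(p_t - D_t)$) carry at most one power of $Y^n_t$ — because $p_t$ depends only on the aggregates $S_t,D_t,\mathcal{N}_t$, never on the individual $Y^n_t$ — so their discounted expectations grow no faster than $e^{(a/\lambda - \rho)t}\,\mathrm{poly}(t)$ and converge whenever $\frac{a}{\lambda}<\rho$. Hence $\frac{a}{\lambda}<\frac{\rho}{2}$ gives absolute integrability of every term (sufficiency), while for $\frac{a}{\lambda}\ge\frac{\rho}{2}$ I would choose initial conditions with $Y^n_0\neq 0$, so the sign-definite $(X^n_t)^2$ term grows at least like $e^{\rho t}$ and cannot be cancelled, breaking convergence (necessity). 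The main obstacle is this last step: carefully bookkeeping the exponential orders of all cross terms to confirm that $(X^n_t)^2$ is strictly dominant and uncancelled, and noting that necessity genuinely relies on admissibility being demanded for \emph{every} initial condition.
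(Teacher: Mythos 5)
Your proposal is correct and follows essentially the same route as the paper's proof: the same closed-form solution of the auction (price, trading rate, and the deterministic exponential $X^n_t - S_t/N = e^{(a/\lambda)t}(X^n_0 - S_0/N)$), the same variation-of-constants ODE inversion to recover $S_t$ for the measurability condition, and the same reduction of absolute convergence of the objective to integrability of $e^{-\rho t}(X^n_t)^2$, using that the $(X^n_t)^2$ coefficient $-\frac{\gamma\sigma_D^2}{2}$ cannot be cancelled by the other terms. The only difference is cosmetic: the paper formalizes the ``dominant and uncancelled'' step with two pointwise polynomial inequalities between $|Q|$ and $(X^n_t)^2$, whereas you do the equivalent bookkeeping at the level of moments; your observation that necessity requires quantifying over all initial conditions matches the paper's footnoted convention.
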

\begin{proof}
Fix a linear symmetric profile given by $a, \lambda, b, c, \xi \in \mathbb{R}$ as in (39) and (40). We need to show that the four conditions for admissibility in Definition 1.1 are satisfied if and only if $\lambda > 0$ and $\frac{a}{\lambda} < \frac{\rho}{2}$. Clearly the first condition holds if and only if $\lambda > 0$. Next we will show that a profile satisfying (39) and (40) always satisfies the second and fourth conditions. Finally we will show that the third condition holds if and only if $\frac{2a}{\lambda} < \rho$, thus completing the proof of the lemma. 

Note that by combining equations (39) and (40) with equations (33) - (35) we can conclude that prices, trading rates, and inventories under a linear symmetric profile must satisfy
\begin{align}
p_t &= \Big(\frac{a}{N} + c\Big)S_t + bD_t +\xi - \frac{\lambda}{N}\mathcal{N}_t\\
X^n_t &= e^{\frac{a}{\lambda}t}\Big(X^n_0 - \frac{S_0}{N}\Big) + \frac{S_t}{N}\\
q^n_t &= \frac{a}{\lambda}\Big(X^n_t - \frac{S_t}{N}\Big) + \frac{1}{N}\mathcal{N}_t
\end{align}
$\forall t \geq 0$ and $\forall n = 1, \cdots, N$. These formulas imply that trading rates are almost surely continuous and thus the second condition holds.

To prove the fourth condition it suffices to prove that $S_t \in \sigma\Big(\{D_s\}_{0 \leq s \leq t}, \{p_s\}_{0 \leq s < t}, S_0\Big)$ $\forall t \geq 0$. Equations (30) and (41) imply that the following ODE holds path by path for the process $\{S_t\}$:
\begin{equation*}
\frac{dS_t}{dt} = \frac{N}{\lambda}\Big(\frac{a}{N} + c\Big)S_t + \mathcal{A}_t,
\end{equation*}
where the process $\{\mathcal{A}_t\}$ is defined by
\begin{equation*}
\mathcal{A}_t = \frac{N}{\lambda}(bD_t + \xi - p_t).
\end{equation*}
It's clear from this definition that $\{\mathcal{A}_s\}_{0 \leq s < t} \in \sigma\Big(\{D_s\}_{0 \leq s \leq t}, \{p_s\}_{0 \leq s < t}, S_0\Big) \ \forall t \geq 0$. Furthermore the ODE above implies that 
\begin{equation*}
S_t = e^{\frac{N}{\lambda}\Big(\frac{a}{N} + c\Big)t}S_0 + \int_0^te^{\frac{N}{\lambda}\Big(\frac{a}{N} + c\Big)(t-s)}\mathcal{A}_sds
\end{equation*}
from which it follows that $S_t \in \sigma\Big(\{D_s\}_{0 \leq s \leq t}, \{p_s\}_{0 \leq s < t}, S_0\Big) \ \forall t \geq 0$.

Turning to the third condition for admissibility, note that (31) and (41) - (43) imply that under a linear symmetric profile the integrand in (38) is of the form $e^{-\rho t}Q(X^n_t, D_t, \tilde{S}_t, S_t)$, where $Q$ is a second order polynomial with an $(X^n_t)^2$ coefficient of $- \frac{\gamma \sigma_D^2}{2} \neq 0$ . Thus $\exists$ constants $M_0, M_1, M_2, M_3, M_4 > 0$ such that 
\begin{equation*}
|Q(X^n_t, D_t, \mathcal{N}_t, S_t)| \leq M_0 + M_1\big((X^n_t)^2 + D_t^2 + \tilde{S}_t^2 + S_t^2\big)
\end{equation*}
\begin{equation*}
(X^n_t)^2 \leq M_2 + M_3|Q(X^n_t, D_t, \mathcal{N}_t, S_t)| + M_4(D_t^2 + \mathcal{N}_t^2 + S_t^2)
\end{equation*}
$\forall t \geq 0$ almost surely. Equations (30) - (32) and (36) imply that $e^{-\rho t}D_t^2$, $e^{-\rho t}\tilde{S}_t^2$, and $e^{-\rho t} S_t^2$ are integrable for any initial conditions. Thus from these bounds it follows that (38) converges absolutely for any initial conditions and for any $n = 1, \cdots, N$ if and only if $e^{-\rho t}(X^n_t)^2$ is integrable for any initial conditions and for any $n = 1, \cdots, N$. From (42) it follows that this latter condition holds if and only if $\frac{a}{\lambda} < \frac{\rho}{2}$.
\end{proof}

This section formulates the optimal response problem for an individual market maker as a standard stochastic control problem. A key point is that in forming an optimal response, a market maker can optimize directly over her trading rate. This is analogous to the idea in Section 2 of formulating the optimal response problem as maximization against a linear supply curve. 

\begin{prop}
$a, \lambda, b, c, \xi \in \mathbb{R}$ are a linear symmetric Nash equilibrium if and only if $\lambda > 0$, $\frac{a}{\lambda} < \frac{\rho}{2}$, and for any initial conditions we have that 
\begin{equation}
\frac{a}{\lambda}\Big(X_t - \frac{S_t}{N}\Big) - \frac{\phi}{N}(S_t - \tilde{S}_t) \in \argmax_{\{q_t\}} \mathbb{E}\bigg[\int_0^{\infty}e^{-\rho t}\Big(-q_t(p_t - D_t) + \mu X_t - \frac{\gamma \sigma_D^2}{2}(X_t)^2\Big)dt\bigg],
\end{equation}
where
\begin{align}
p_t &=\Big(\frac{a}{N-1} + c + \frac{\lambda\phi}{N-1}\Big)S_t + bD_t + \xi - \frac{\lambda\phi}{N-1}\tilde{S}_t - \frac{a}{N-1}X_t + \frac{\lambda}{N-1}q_t.
\end{align} 
The relevant dynamics are
\begin{align*}
dX_t &= q_tdt \\
dD_t &= \mu dt + \sigma_D dB^D_t \\
d\tilde{S}_t &= -\psi\tilde{S}_tdt + \sigma_{\tilde{S}}dB^{\tilde{S}}_t \\
dS_t &= -\phi(S_t - \tilde{S}_t)_tdt
\end{align*}
and the optimization is constrained to those processes $\{q_t\}$ such that
\begin{enumerate}
\item $\int_0^T|q_t|dt < \infty \ \ \ \ \ \forall T \geq 0$
\item The double integral on the right side of (44) converges absolutely 
\item $q_t \in \sigma(\{D_u\}_{0 \leq u \leq t}, \{S_u\}_{0 \leq u \leq t}, \{X_u\}_{0 \leq u \leq t},  \{\tilde{S}_u\}_{0 \leq u \leq t}) \ \ \ \ \ \forall t \geq 0$.
\end{enumerate}  
\end{prop}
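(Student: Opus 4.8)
The plan is to mirror the reduction used for the one-period model: fix all but one market maker at the candidate linear symmetric profile, show that the remaining agent's best-response problem is exactly the stochastic control problem on the right-hand side of (44), and then observe that the Nash condition is precisely the requirement that the agent's own linear symmetric strategy attain the argmax. First I would dispose of admissibility. By Lemma 4.1 the linear symmetric profile is admissible for any initial conditions if and only if $\lambda > 0$ and $\frac{a}{\lambda} < \frac{\rho}{2}$, so these two conditions are necessary for a Nash equilibrium and may be assumed throughout the remainder of the argument.

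Next I would derive the residual supply curve (45). Holding agents $m \neq n$ at the profile (39)--(40), each contributes the trade rate $q^m_t = \lambda^{-1}(aX^m_t + bD_t + cS_t + \xi - p_t)$. Summing over $m \neq n$, using $\sum_m X^m_t = S_t$ so that $\sum_{m\neq n} X^m_t = S_t - X_t$, and imposing market clearing $q_t + \sum_{m\neq n} q^m_t = \mathcal{N}_t$ together with $\mathcal{N}_t = -\phi(S_t - \tilde{S}_t)$, I would solve the resulting linear relation for $p_t$ to obtain (45). This is the dynamic analog of (7): the price the remaining agent faces is an affine function of her own trade rate $q_t$ with slope $\frac{\lambda}{N-1} > 0$, and intercept depending on $S_t, D_t, X_t, \tilde S_t$.

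The substantive step, and the one I expect to be the main obstacle, is arguing that optimizing over admissible demand-schedule processes is equivalent to optimizing over trade-rate processes subject to conditions 1--3. One direction is recoverability of the state from observables: under the fixed profile, Lemma 4.1 already gives that $S_t$ is measurable with respect to the history of prices, valuations, and $S_0$; and since the coefficient of $\tilde{S}_t$ in (45) is $-\frac{\lambda\phi}{N-1} \neq 0$, the agent can additionally back out $\tilde S_t$ from the observed price once she knows $S_t, D_t, X_t$ and her own $q_t$. Hence her information set generates the same filtration as $\sigma(\{D_u\}_{u\le t}, \{S_u\}_{u\le t}, \{X_u\}_{u\le t}, \{\tilde S_u\}_{u\le t})$, which is condition 3. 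The other direction is attainability: given (45), for any target $q_t$ and any $\beta_t > 0$ one may choose $\alpha_t$ so that the demand schedule $(\alpha_t, \beta_t)$ clears at exactly $(q_t, p_t)$; thus every trade-rate process meeting conditions 1--3 arises from an admissible demand schedule, while conditions 1 and 2 are just the admissibility conditions transcribed onto $\{q_t\}$. Because the price is pinned down by the trade rate, the agent is indifferent among demand schedules producing the same $q_t$, so the two problems share the same value and the same maximizers.

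Finally, substituting the objective (38) — which already has the form appearing on the right of (44) — yields the best-response control problem verbatim, with dynamics as listed. The Nash condition is that the agent's own linear symmetric strategy be optimal for every initial condition; by (43) that strategy induces the trade rate $\frac{a}{\lambda}\big(X_t - \frac{S_t}{N}\big) + \frac{1}{N}\mathcal{N}_t = \frac{a}{\lambda}\big(X_t - \frac{S_t}{N}\big) - \frac{\phi}{N}(S_t - \tilde S_t)$, which is exactly the expression required to lie in the argmax of (44). Combining this with the admissibility reduction of the first step establishes the claimed equivalence.
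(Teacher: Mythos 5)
Your proposal is correct and follows essentially the same route as the paper's proof: dispose of admissibility via Lemma 4.1, derive the residual supply curve (45) by aggregating the other agents' schedules under market clearing, then establish the reduction through the three points of indifference (price is pinned down by the trade rate), attainability (any rate satisfying conditions 1--3 is implementable by an explicit admissible schedule, using recoverability of $S_t$ and $\tilde{S}_t$ from the agent's observables), and the fact that the profile strategy induces exactly the candidate rate $\frac{a}{\lambda}\big(X_t - \frac{S_t}{N}\big) - \frac{\phi}{N}(S_t - \tilde{S}_t)$. The only slight looseness is citing Lemma 4.1 directly for recoverability of $S_t$ under a deviation---the ODE there must be re-derived from (45) including the deviating agent's own $X_t$ and $q_t$, which is precisely what the paper does before noting that the conclusion then ``follows exactly as in Lemma 4.1.''
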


\begin{proof}
To have a Nash equilibrium we must have a profile that satisfies the admissibility and optimality conditions in Definition 3.2. Lemma 4.1 states that for a linearly symmetric profile the admissibility condition holds if and only if $\lambda > 0$ and $\frac{a}{\lambda} < \frac{\rho}{2}$. Thus to prove the proposition we need only show that the optimality condition holds if and only if (44) holds. 

Consider the optimal response problem of an individual market maker when facing a linear symmetric profile of demand schedules given by $a, \lambda, b, c, \xi \in \mathbb{R}$. Denote the inventory process and trading rate process for the remaining market maker by $\{X_t\}$ and $\{q_t\}$. If the remaining market maker chooses the demand schedule process $\{(\alpha_t, \beta_t)\}$, then her trading rate and the price process $\{p_t\}$ are given implicitly by 
\begin{align}
p_t &=\Big(\frac{a}{N-1} + c\Big)S_t + bD_t + \xi -\frac{\lambda}{N-1}\mathcal{N}_t - \frac{a}{N-1}X_t + \frac{\lambda}{N-1}q_t\\
p_t &= \alpha_t - \beta_t q_t.
\end{align}

The optimal response problem for the remaining market maker is to choose the demand schedule process $\{(\alpha_t, \beta_t)\}$ such that the profile of all agents' schedules is admissible\footnote{In the rest of the proof we will say that the remaining market maker's demand schedule process is admissible if the corresponding profile of all market makers' demand schedules is admissible.}, and the objective in (44) is maximized over all such processes. In computing the objective the relevant equations are (46)\footnote{(45) and (46) are the same equations because of (31).}, (47), and the differential equations in the statement of the proposition. The optimality condition for Nash equilibrium is satisfied if and only if for any set of initial conditions a maximizing choice of demand schedule process is $(\alpha_t, \beta_t) = (aX_t + bD_t +cS_t + \xi, \lambda)$. We now make three points regarding this optimal response problem. 

The first point is that the remaining market maker will be indifferent between any two demand schedule processes leading to the same trading rate. Indeed for fixed initial conditions, the objective in (44) depends on the choice of demand schedule process only through the trading rate and price processes. Furthermore, from (46) we see that the price process depends on the demand schedule processes only through the trading rate process. Hence the objective is constant over demand schedule processes leading to the same trading rate process. 

The second point is that by choosing an appropriate admissible demand schedule, the remaining market maker can achieve any trading rate satisfying the three conditions in the proposition. To prove this it suffice to show that 
\begin{equation}
S_t, \ \mathcal{N}_t \in \sigma\Big(\{D_s\}_{0 \leq s \leq t}, \{p_s\}_{0 \leq s < t}, \{X_s\}_{0 \leq s \leq t}, \{q_s\}_{0 \leq s < t}, S_0\Big) \ \ \ \ \ \forall t\geq0
\end{equation}
for any choice of the remaining market maker's demand schedule. If this is the case, then for any $\{\tilde{q}_t\}$ satisfying the three conditions in the proposition we have that the demand schedule process
\begin{equation*}
\Bigg\{\bigg(\Big(\frac{a}{N-1} + c\Big)S_t + bD_t + \xi -\frac{\lambda}{N-1}\mathcal{N}_t - \frac{a}{N-1}X_t + \Big(1 + \frac{\lambda}{N-1}\Big)\tilde{q}_t,1\bigg)\Bigg\}
\end{equation*}
is admissible, and by (46) and (47) it gives the remaining market maker the trading rate process $\{q_t\} = \{\tilde{q}_t\}$. 

To prove (48) it suffices to prove that 
\begin{equation}
\{S_u\}_{0 \leq u \leq t} \in \sigma\Big(\{D_s\}_{0 \leq s \leq t}, \{p_s\}_{0 \leq s < t}, \{X_s\}_{0 \leq s \leq t}, \{q_s\}_{0 \leq s < t}, S_0\Big) \ \ \ \ \ \forall t\geq0
\end{equation}
for any choice of the remaining market maker's demand schedule. This is because by differentiating $\{S_u\}_{0 \leq u \leq t}$ we can recover $\{\mathcal{N}\}_{0 \leq u \leq t}$. To prove (49) note that (46) implies that for any choice of the remaining market maker's demand schedule we have  
\begin{align*}
\frac{dS_t}{dt} &= \frac{N-1}{\lambda}\Big(\frac{a}{N-1} + c\Big)S_t + \mathcal{A}_t \\
\mathcal{A}_t &= \frac{N-1}{\lambda}\Big(bD_t + \xi - \frac{a}{N-1}X_t + \frac{\lambda}{N-1}q_t - p_t\Big).
\end{align*}
From this (49) follows exactly as in Lemma 4.1.

The final point is that if the remaining market maker chooses the demand schedule process $(\alpha_t, \beta_t) = (aX_t + bD_t +cS_t + \xi, \lambda)$, then her trading rate is 
\begin{equation}
q_t = \frac{a}{\lambda}\Big(X_t - \frac{S_t}{N}\Big) + \frac{\mathcal{N}_t}{N}
\end{equation}
$\forall t \geq 0$. Indeed (50) follows simply by plugging $(\alpha_t, \beta_t) = (aX_t + bD_t +cS_t + \xi, \lambda)$ in (46) - (47) and solving for $q_t$.

This third point shows that if the remaining market maker follows the linear symmetric profile, then her trading rate is the maximizer in (44). The first two points imply that solving the optimal response problem against a linear symmetric profile is the same as solving the constrained optimization problem in the proposition. Hence the proposition follows. 
\end{proof}

This proposition associates to each linear symmetric profile $a, \lambda, b, c, \xi \in \mathbb{R}$ an optimization problem, as well as a candidate solution of the problem. The parameters provide an equilibrium if and only if the candidate is truly a solution. Below we will use the theory of stochastic control to derive first order conditions for the optimization problem. Using these we will demonstrate that there is a unique profile for which the candidate is a true solution, and thus there is a unique linear symmetric equilibrium. 

Before proceeding, we establish some notation relevant to the proposition and it's use below. Given a set of parameters $a, \lambda, b, c, \xi \in \mathbb{R}$, the optimization in the proposition is a standard stochastic control problem on an infinite horizon, with state space $(x, d, \tilde{s}, s) \in \mathbb{R}^4$ and control space $q \in \mathbb{R}$. Denote the covariables for the problem by $y = (y_x, y_d, y_{\tilde{s}}, y_s) \in \mathbb{R}^4$ and 
\begin{equation*}
z =   
\begin{pmatrix}
z_{xx} & z_{xd} & z_{x\tilde{s}} & z_{xs}\\
z_{dx} & z_{dd} & z_{d\tilde{s}} & z_{ds}\\ 
z_{\tilde{s} x} & z_{\tilde{s} d} & z_{\tilde{s}\tilde{s}} & z_{\tilde{s} s}\\
z_{sx} & z_{sd} & z_{s\tilde{s}} & z_{ss}
\end{pmatrix}
\in \mathbb{R}^{4\times4}.
\end{equation*}
The Hamiltonian for the problem is $H: \mathbb{R}^4 \times \mathbb{R}^4 \times \mathbb{R}^{4\times4} \times \mathbb{R} \to \mathbb{R}$ given by
\begin{multline*}
H(x, d, \tilde{s}, s, y, z, q) = qy_x + \mu y_d - \psi\tilde{s} y_{\tilde{s}} - \phi(s - \tilde{s}) y_s + \frac{\sigma_D^2}{2}z_{dd} + \frac{\sigma_{\tilde{S}}^2}{2}z_{\tilde{s}\tilde{s}} + \mu x - \frac{\gamma\sigma_D^2}{2}x^2 - q\big(P(x, d, \tilde{s}, s, q) - d\big).
\end{multline*}
where $P:\mathbb{R}^4 \times \mathbb{R} \to \mathbb{R}$ is the function specifying prices as a function of state and control from (45) above, i.e. 
\begin{equation*}
P(x, d, \tilde{s}, s, q) = \Big(\frac{a}{N-1} + c + \frac{\lambda\phi}{N-1}\Big)s + bd + \xi - \frac{\lambda\phi}{N-1}\tilde{s} - \frac{a}{N-1}x + \frac{\lambda}{N-1}q\\.
\end{equation*}
Also denote by $Q: \mathbb{R}^4 \to \mathbb{R}$ the mapping corresponding to the Markov control in (44), i.e. 
\begin{equation}
Q(x, d, \tilde{s}, s) = \frac{a}{\lambda}\big(x - \frac{s}{N}) - \frac{\phi}{N}(s - \tilde{s}).
\end{equation}
$H$ and $P$ describe the optimization problem, and $Q$ is the candidate solution. The functions $H, \ P$, and $Q$ all depend on the parameters a, $\lambda$, b, c, and $\xi$, but this dependence is suppressed in the notation.

We will prove that $a, \lambda, b, c, \xi \in \mathbb{R}$ are a linear symmetric Nash equilibrium if and only if 
\begin{align}
a &= -\frac{N-1}{\delta}\gamma\sigma_D^2 \\
\lambda &= \frac{N-1}{N-2}\frac{\rho\gamma\sigma_D^2}{(\rho + \psi)(\rho + \phi)}\Big(\frac{1}{\delta} + \frac{1}{\rho}\Big) \\
b &= 1\\
c &= -\frac{\rho(\rho + \psi + \phi)}{(\rho + \psi)(\rho + \phi)}\frac{\gamma\sigma_D^2}{N}\Big(\frac{1}{\delta} + \frac{1}{\rho}\Big) + \frac{\gamma\sigma_D^2}{\delta} \\
\xi &= \frac{\mu}{\rho}.
\end{align} prices and trading rates are given by (41) - (43), and plugging in these values gives the formulas in the statement of the theorem. 

We begin by proving the only if part of the statement. To this end, suppose the parameters $a, \lambda, b, c, \xi \in \mathbb{R}$ are a linear symmetric Nash equilibrium. Denote by $V(x, d, \tilde{s}, s)$ the value function of the optimization problem in Proposition 4.2, i.e. the value of the supremum in (44) when the initial conditions are $(X_0, D_0, \tilde{S}_0, S_0) = (x, d, \tilde{s}, s)$. 

Note that the value function is smooth. Indeed because the parameters provide an equilibrium, (44) must hold, and therefore $V(x, d, \tilde{s}, s)$ can be computed by evaluating the expectation in (44) along the control process $\frac{a}{\lambda}\Big(X_t - \frac{S_t}{N}\Big) + \frac{\phi}{N}(S_t - \tilde{S}_t)$. This provides us with an explicit expression for $V$, and by direct inspection it follows that $V$ is smooth. 

It follows that $V$ satisfies the following HJB equation:
\begin{equation}
\rho V(x, d, \tilde{s}, s) = \sup_{q \in \mathbb{R}} H(x, d, \tilde{s}, s, \nabla V, \nabla^2 V, q) \ \ \ \ \ \forall (x, d, \tilde{s}, s) \in \mathbb{R}^4.
\end{equation}
Furthermore, since $Q$ is an optimal Markov control, it also follows that 
\begin{equation}
Q(x, d, \tilde{s}, s) \in \argmax_{q \in \mathbb{R}} H(x, d, \tilde{s}, s, \nabla V, \nabla^2 V, q) \ \ \ \ \ \forall (x, d, \tilde{s}, s) \in \mathbb{R}^4.
\end{equation}
(57) is the classical result that if the value function is smooth then it satisfies the HJB equation \cite{touziOptimalStochasticControl2013}. When an optimal Markov control is known to exist, one way to prove (57) is to first prove (58) \cite{carmonaLecturesBSDEsStochastic2016}. A lemma explicitly proving (58) is included in the appendix to this paper for completeness. 

The first order condition for (58) is 
\begin{equation}
V_x(x, d, \tilde{s}, s) = P\big(x, d, \tilde{s}, s, Q(x, d, \tilde{s}, s)\big) - d + \frac{\lambda}{N-1}Q(x, d, \tilde{s}, s) \ \ \ \ \ \forall (x, d, \tilde{s}, s) \in \mathbb{R}^4.
\end{equation}
Anti-differentiating (59) it follows that $\exists$ a smooth function $w: \mathbb{R}^3 \to \mathbb{R}$ such that 
\begin{multline}
V(x, d, \tilde{s}, s) = \frac{1}{2}\frac{a}{N-1}x^2 + (b-1)xd - \Big(\frac{N-2}{N(N-1)}\lambda\Big)x\tilde{s} + \Big(\frac{N-2}{N(N-1)}a + c\Big)xs + \xi x + w(d, \tilde{s}, s)
\end{multline}
$\forall (x, d, \tilde{s}, s) \in \mathbb{R}^4$.

In summary, we've shown that if $a, \lambda, b, c, \xi \in \mathbb{R}$ are a linear symmetric Nash equilibrium then (57) - (60) hold. Combining these equations, we conclude that $\exists$ a smooth function $w: \mathbb{R}^3 \to \mathbb{R}$ such that
\begin{multline}
\frac{\rho}{2}\frac{a}{N-1}x^2 + \rho(b-1)xd -\rho\frac{N-2}{N(N-1)}\lambda x\tilde{s} + \rho\Big(\frac{N-2}{N(N-1)}a + c\Big)xs + \rho\xi x + \rho w(d, \tilde{s}, s) \\
= \Big(\frac{a^2}{\lambda(N-1)} - \frac{\gamma\sigma_D^2}{2}\Big)x^2 + \Big(\frac{a}{N-1} + \frac{N-2}{N(N-1)}\psi\lambda + c\Big)x\tilde{s} - \frac{2a^2}{N(N-1)\lambda}xs + \mu bx \\
+ \mu w_d - \psi\tilde{s} w_{\tilde{s}} + \tilde{s} w_s  + \frac{\sigma_D^2}{2}w_{dd} + \frac{\sigma_\mathcal{N}^2}{2}w_{\tilde{s}\tilde{s}} + \frac{\lambda}{(N-1)N^2}\Big(\tilde{s} - \frac{a}{\lambda}s\Big)^2
\end{multline}
$\forall (x, d, \tilde{s}, s) \in \mathbb{R}^4$.

It remains to be shown that (61) implies (52) - (56). Note that since the function $w$ is independent of $x$, the coefficients of $x^2,\ xd,\ x\tilde{s},\ xs$, and $x$ must be equal on the left and right hand sides of (32). Equating the coefficients of $xd$ and $x$ immediately gives (54) and (56). Equating the remaining coefficients yields the algebraic system 
\begin{align}
\frac{\rho}{2}\frac{a}{N-1} &= \frac{a^2}{\lambda(N-1)} - \frac{\gamma\sigma_D^2}{2} \\
-\rho\frac{N-2}{N(N-1)}\lambda &= \frac{a}{N-1} + \frac{N-2}{N(N-1)}(\psi+ \phi)\lambda + c\\
\rho\Big(\frac{N-2}{N(N-1)}(a + \lambda\phi) + c\Big) &= -\frac{2a^2}{N(N-1)\lambda} - \frac{N- 2}{N(N-1)}\phi^2\lambda -\phi \Big(\frac{a}{N-1} + c\Big).
\end{align}

Equation (63) gives $c$ in terms of $a$ and $\lambda$, from which it follow that if (52) and (53) hold then so too does (55). Next we plug this expression for $c$ into (64) and solve for $\lambda$ in terms of $a$ to get $\lambda = -\frac{1}{\rho + \psi}\Big(\frac{a}{N-2} - \frac{N-1}{N-2}\frac{\gamma\sigma_D^2}{\rho}\Big)$. From this equation it follows that if (52) holds then so too does (53). Hence it only remains to prove that (52) holds. Plugging this expression for $\lambda$ into (62), we see that $a$ must satisfy $a^2 = \frac{(N-1)^2\gamma^2\sigma_D^4}{\delta^2}$. Now, because the parameters provide an equilibrium, the constraint $\frac{2a}{\lambda} < \rho$ must hold. The positive root for $a$ violates the constraint and the negative root satisfies it, so it follows that (52) holds. 
 
We now prove the if part of the statement. To this end, suppose that $a,\ b,\ c,\ \xi$, and $\lambda$ are given by equations (52) - (56). We need to show (44) holds, i.e. that the mapping $Q$ in (51) provides an optimal Markov control for the stochastic control problem in Proposition 4.2. This will be done by by using the verification theorem for the HJB equation \cite{phamContinuoustimeStochasticControl2009}. 

Given a set of initial conditions, denote by $\{\hat{q}_t\}$ and $\{\hat{X}_t\}$ the trading rate and inventory processes arising from using the
Markov control given by $Q$, i.e. $d\hat{X}_t = Q(\hat{X}_t, D_t, \tilde{S}_t, S_t)dt$ and $\hat{q}_t = Q(\hat{X}_t, D_t, \tilde{S}_t, S_t)$. These processes depend on the choice of initial conditions, but this is suppressed in the notation. We need to show that $\exists$ a smooth function $V(x, d, \tilde{s}, s)$ such that
\begin{enumerate}
\item (57) holds
\item (58) holds
\item $e^{-\rho t}\mathbb{E}[V(\hat{X}_t, D_t, \tilde{S}_t, S_t)] \to 0$ as $t \to \infty$ for any choice of initial conditions
\item $\{\hat{q}_t\}$ satisfies the the constraints in Proposition 4.2 for any choice of initial conditions. 
\end{enumerate} 
If this can be done then it follows by the verification theorem that $Q$ is an optimal Markov control for the stochastic control problem in Proposition 4.2. 

Note that we can find a second order polynomial $w(d, \tilde{s}, s)$ that satisfies the equation 
\begin{equation*}
\rho w = \mu w_d - \psi\tilde{s} w_{\tilde{s}} - \phi(s - \tilde{s})w_s  + \frac{\sigma_D^2}{2}w_{dd} + \frac{\sigma_{\tilde{S}}^2}{2}w_{\tilde{s}\tilde{s}} + \Bigg(\frac{\phi}{N}\sqrt{\frac{\lambda}{N-1}}(s - \tilde{s}) + \frac{a}{N}\frac{1}{\sqrt{\lambda(N-1)}}s\Bigg)^2
\end{equation*}
on all of $\mathbb{R}^3$. Now define the function $V(x, d, \tilde{s}, s)$ by equation (60). Then the function $V$ is smooth and by construction equations (59) and (61) hold. Notice that as a function of $q$ the Hamiltonian is a quadratic polynomial with leading coefficient $-\frac{\lambda}{N-1}$. Since $\lambda > 0$ it follows that (59) is not only a necessary condition for (58) but also a sufficient one. Thus (58) holds. This implies that the equation (57) is precisely the equation (61), and so (57) also holds. 

Since $V$ is a second order polynomial, in order to prove the third condition it suffices to show that $\mathbb{E}[e^{-\rho t}\hat{X}_t^2]$, $\mathbb{E}[e^{-\rho t}D_t^2]$, $\mathbb{E}[e^{-\rho t}\tilde{S}_t^2]$, and $\mathbb{E}[e^{-\rho t}S_t^2]$ converge to $0$ as $t \to \infty$ for any set of initial conditions. This follows from the fact that $\frac{2a}{\lambda} < \rho$ and $\psi, \phi > 0$. 

Finally, we need to check that $\{\hat{q}_t\}$ satisfies the constraints in Proposition 3.1. The third constraint is trivial since $\hat{q}_t = Q(\hat{X}_t, D_t, \tilde{S}_t, S_t)$. This formula also implies $\{\hat{q}_t\}$ is almost surely continuous, and thus the first condition holds. Lastly, the second condition holds because $\frac{2a}{\lambda} < \rho$ and $\psi, \phi > 0$.

\section{Equilibrium Analysis}

\subsection{Price Analysis}
Consider the equilibrium price process given in Theorem 3.5. The four terms admit intuitive economic interpretations. The first term $D_t$ is simply the market makers' current valuation for the asset. The second term $\frac{\mu}{\rho}$ is a premium for expected valuation growth. $\mu$ is the drift of valuations, so on average valuations increase by $\mu (T-t)$ over a time interval $[t, T]$. As this is common knowledge among the market makers, this must be reflected in the price at time $t$. Otherwise, a profitable deviation from equilibrium would be to buy the asset at time $t$ and sell it at time $T$. Since market makers discount payoffs from time $T$ to time $t$ by $e^{-\rho(T-t)}$, the appropriate premium in the price to prevent this deviation is $\frac{\mu}{\rho}$. Said another way, we have\footnote{The notation here and below is $\mathbb{E}_t[\cdot] := \mathbb{E}[\cdot| \mathcal{F}_t]$.}
\begin{equation*}
\frac{\mu}{\rho} = \mathbb{E}_t\bigg[\int_t^\infty e^{-\rho(T-t)}dD_T\bigg].
\end{equation*}
Thus the second term is the (risk-neutral) present value of expected future changes in the asset's value.  

Since the market makers are not risk neutral, they also require risk compensations to take exposures to the asset. This is the role of the third term in the price $-\theta\frac{\gamma\sigma_D^2}{N}S_t$. When the market makers are in aggregate long the asset, so $S_t$ is positive, this term is negative and thus the asset is trading at a relatively low price. Because the asset is trading at a low price, market makers don't find it profitable to deviate from equilibrium by selling the asset to reduce their exposures. Similarly, when the market makers are in aggregate short the asset, this term causes the asset to trade at a high price, and thus market makers don't find it profitable to buy the asset to reduce their exposures. 

The magnitude of the compensation per unit of exposure is given by $\theta\frac{\gamma\sigma_D^2}{N}$. $\sigma_D^2$ is the volatility of valuations, so this is the amount of risk per unit of exposure to the asset. $\frac{\gamma}{N}$ is the aggregate risk aversion of the market makers, so this is the dollar compensation they require to hold a unit of risk. Thus $\frac{\gamma\sigma_D^2}{N}$ is the dollar compensation the market makers require to hold a unit of exposure to the asset. The intuitive role of the parameter $\theta$ is to take in to account fluctuations in future risk exposures and to discount them to the present. This will be made clearer in Proposition 5.4 below. 

The final term in the price process is $-\frac{\gamma}{N}\frac{N-1}{N-2}\frac{\rho\sigma_D^2}{(\rho + \psi)(\rho + \phi)}\Big(\frac{1}{\delta} + \frac{1}{\rho}\Big)\mathcal{N}_t$, which is the price impact the liquidity traders face on their trades, or equivalently the slope of the supply curve they face when trading. The liquidity traders submit market orders to trade at rate $-\mathcal{N}_t$, so they are buying when $\mathcal{N}_t < 0$ and selling when $\mathcal{N}_t > 0$. When the liquidity traders are buying, price impact causes the trading price to be high, and when the liquidity traders are selling, price impact causes the trading price to be low. This is the model's analogue of liquidity traders' market orders walking the book. As suggested by \cite{kyleInformedSpeculationImperfect1989}, we define liquidity in the model as the reciprocal of price impact and study it's comparative statics. 

\begin{defn}
$Price\ Impact := \frac{\gamma}{N}\frac{N-1}{N-2}\frac{\rho\sigma_D^2}{(\rho + \psi)(\rho + \phi)}\Big(\frac{1}{\delta} + \frac{1}{\rho}\Big)$
\end{defn}

\begin{defn}
$Liquidity := \frac{1}{Price\ Impact}$
\end{defn}

\begin{prop} 
\begin{enumerate}
\setlength{\itemsep}{3mm}
\item $\frac{\partial}{\partial \gamma}Liquidity < 0$. \\
Liquidity is decreasing in market makers' risk aversion.
\item $\frac{\partial}{\partial \sigma_D}Liquidity < 0$. \\
Liquidity is decreasing in fundamental volatility.
\item If $\frac{\gamma}{N}$ is held fixed then $\frac{\partial}{\partial N}Liquidity > 0$. \\
Liquidity is increasing in market maker competition. 
%\item $\frac{\partial}{\partial \rho}Liquidity > 0$. \\
%Liquidity is increasing in market makers' impatience.
\item $\frac{\partial}{\partial \psi}Liquidity > 0$. \\ 
Liquidity is decreasing in order flow uncertainty.
\end{enumerate}
\end{prop}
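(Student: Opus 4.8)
The plan is to reduce all four parts to a single observation: since $L := \text{Liquidity} = 1/I$ with $I := \text{Price Impact} > 0$, we have $\partial L/\partial x = -I^{-2}\,\partial I/\partial x$, so $\partial L/\partial x$ carries the opposite sign of $\partial I/\partial x$. Thus it suffices in each case to determine the monotonicity of $I$. Throughout I will lean on the elementary fact that a product of two strictly positive, strictly decreasing (resp.\ increasing) functions of a single variable is again strictly decreasing (resp.\ increasing), which follows from $(fg)' = f'g + fg'$.

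For parts (1) and (2), the key point is that $\delta = \sqrt{\rho^2 + 2(N-2)(\rho+\psi)(\rho+\phi)}$ depends on neither $\gamma$ nor $\sigma_D$. Hence $I = \gamma\,\sigma_D^2\,K$, where
\begin{equation*}
K := \frac{1}{N}\frac{N-1}{N-2}\frac{\rho}{(\rho+\psi)(\rho+\phi)}\left(\frac{1}{\delta}+\frac{1}{\rho}\right) > 0
\end{equation*}
is constant in both $\gamma$ and $\sigma_D$. Therefore $I$ is strictly increasing in $\gamma > 0$ and in $\sigma_D > 0$, so $L$ is strictly decreasing in each, settling (1) and (2) immediately.

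For part (4) I would freeze every parameter except $\psi$ and write $I = C\cdot \frac{1}{\rho+\psi}\left(\frac{1}{\delta}+\frac{1}{\rho}\right)$ with $C>0$ constant in $\psi$. Both factors are positive and strictly decreasing in $\psi$: the factor $1/(\rho+\psi)$ obviously, and $1/\delta$ because $\partial\delta/\partial\psi = (N-2)(\rho+\phi)/\delta > 0$ forces $\delta$ to increase. The product fact then gives $\partial I/\partial\psi < 0$, hence $\partial L/\partial\psi > 0$.

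Part (3) is the one I expect to require the most care, because $N$ enters $I$ in three places—through the prefactor $\gamma/N$, through $(N-1)/(N-2)$, and through $\delta$—and the comparative static is taken with $\gamma/N$ held fixed and $N$ treated as a continuous variable on $(2,\infty)$. The plan is first to substitute the fixed quantity $g := \gamma/N$, which absorbs the $1/N$ prefactor and leaves
\begin{equation*}
I = g\,\frac{\rho\sigma_D^2}{(\rho+\psi)(\rho+\phi)}\cdot\frac{N-1}{N-2}\left(\frac{1}{\delta}+\frac{1}{\rho}\right),
\end{equation*}
whose leading constant is now independent of $N$. It then remains to check that the two surviving $N$-dependent factors are positive and strictly decreasing: $(N-1)/(N-2) = 1 + 1/(N-2)$ has derivative $-1/(N-2)^2 < 0$, and $1/\delta+1/\rho$ decreases because $\partial\delta/\partial N = (\rho+\psi)(\rho+\phi)/\delta > 0$. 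The product fact yields $\partial I/\partial N < 0$ and hence $\partial L/\partial N > 0$. The only genuine subtlety is confirming that fixing $g$ legitimately strips all $N$-dependence from the prefactor, so that the monotonicity argument applies to a clean product of two decreasing terms.
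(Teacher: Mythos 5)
Your proof is correct. The paper states this proposition without proof, treating it as a routine verification, and your argument is exactly that verification carried out correctly: reducing each part to the sign of $\partial(\text{Price Impact})/\partial x$ via $L = 1/I$, noting that $\delta$ is independent of $\gamma$ and $\sigma_D$ (so parts (1) and (2) are immediate), and handling (3) and (4) by the product-of-positive-strictly-decreasing-factors observation together with $\partial\delta/\partial N = (\rho+\psi)(\rho+\phi)/\delta > 0$ and $\partial\delta/\partial\psi = (N-2)(\rho+\phi)/\delta > 0$.
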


To understand the fourth point, recall that the liquidity traders' order flow is driven by $\{\tilde{S}_t\}$, which is an Ornstein-Uhlenbeck process whose stationary distribution has variance $\frac{\sigma_{\mathcal{N}}^2}{2\psi}$. Thus as $\psi$ increases, the liquidity traders' orders arrive with less uncertainty. 

These comparative statics agree with real word intuition about liquidity, and thus they justify the theoretical definition of liquidity. Put simply, liquidity should reflect market makers' willingness to absorb temporary flow. Market makers are less willing to absorb temporary flow when they are more risk averse, when the asset is riskier, when there are not many of them, or when order flow is more uncertain. By the proposition, liquidity is also lower in the model in these situations. 

Next we study what happens to price impact in the competitive limit of the model. This is the limit when $N \to \infty$ and $\frac{\gamma}{N} \to \gamma_0$. Essentially one considers the sequence of models with $N$ market makers each having risk aversion $\gamma_N := N\gamma_0$. Thus for any model in the sequence, the aggregate risk aversion of the market makers is $\frac{\gamma_N}{N} = \gamma_0$. Hence the sequence considers increasingly competitive market making sectors that in aggregate have the same risk bearing capacity. Taking the limit of the sequence provides a perfectly competitive benchmark for the model.

\begin{prop}
In the competitive limit, we have that
\begin{equation*}
Price\ Impact \to \frac{\gamma_0\sigma_D^2}{(\rho + \phi)(\rho + \psi)}
\end{equation*}
and 
\begin{equation*}
p_t \to D_t + \frac{\mu}{\rho} - \mathbb{E}_t\bigg[\int_t^\infty e^{-\rho(T-t)}\gamma_0\sigma_D^2S_TdT\bigg] \ \ \ \ \ \forall (t, \omega) \in [0, \infty) \times \Omega
\end{equation*}
\end{prop}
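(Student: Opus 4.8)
The plan is to exploit the fact that $D_t$, $S_t$, $\tilde{S}_t$ and $\mathcal{N}_t$ are \emph{exogenous} processes governed by (30)--(32), and in particular do not depend on $N$. Consequently the entire competitive limit reduces to taking $N \to \infty$ in the deterministic coefficients of Theorem 3.5 that multiply these fixed processes; the stated pathwise convergence then follows termwise for each $(t,\omega)$.

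First I would dispatch the price impact. Writing $\delta = \sqrt{\rho^2 + 2(N-2)(\rho+\psi)(\rho+\phi)}$, the key observation is that $\delta \to \infty$, hence $\frac{1}{\delta} \to 0$ and $\frac{1}{\delta} + \frac{1}{\rho} \to \frac{1}{\rho}$. Combined with $\frac{N-1}{N-2} \to 1$ and $\frac{\gamma}{N} \to \gamma_0$, the price impact converges to $\gamma_0 \cdot \frac{\rho\sigma_D^2}{(\rho+\psi)(\rho+\phi)} \cdot \frac{1}{\rho} = \frac{\gamma_0\sigma_D^2}{(\rho+\phi)(\rho+\psi)}$. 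This proves the first claim and simultaneously gives the limit $-\frac{\gamma_0\sigma_D^2}{(\rho+\psi)(\rho+\phi)}\mathcal{N}_t$ of the fourth term of $p_t$. For the third term I would note $\frac{\gamma\sigma_D^2}{N} \to \gamma_0\sigma_D^2$ and, by splitting $\theta = \frac{\rho+\psi+\phi}{(\rho+\psi)(\rho+\phi)} - \frac{\psi\phi}{(\rho+\psi)(\rho+\phi)\delta}$, that $\theta \to \frac{\rho+\psi+\phi}{(\rho+\psi)(\rho+\phi)}$ since $\delta \to \infty$. Thus the third term tends to $-\frac{(\rho+\psi+\phi)\gamma_0\sigma_D^2}{(\rho+\psi)(\rho+\phi)}S_t$. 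The first two terms $D_t$ and $\frac{\mu}{\rho}$ are $N$-independent and pass through unchanged.

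It then remains to identify the limiting sum of the third and fourth terms with the present-value expression, i.e.\ to verify the purely analytic identity
\[
\mathbb{E}_t\Big[\int_t^\infty e^{-\rho(T-t)}S_T\,dT\Big] = \frac{\rho+\psi+\phi}{(\rho+\psi)(\rho+\phi)}S_t + \frac{1}{(\rho+\psi)(\rho+\phi)}\mathcal{N}_t.
\]
To establish this I would solve the closed linear system for the conditional means: from (32), $\mathbb{E}_t[\tilde{S}_T] = \tilde{S}_t e^{-\psi(T-t)}$, and feeding this into $\frac{d}{dT}\mathbb{E}_t[S_T] = -\phi\,\mathbb{E}_t[S_T] + \phi\,\mathbb{E}_t[\tilde{S}_T]$ (which comes from $dS_t = \mathcal{N}_t\,dt$ and $\mathcal{N}_t = -\phi(S_t - \tilde{S}_t)$) yields $\mathbb{E}_t[S_T] = S_t e^{-\phi(T-t)} + \phi\tilde{S}_t\,\frac{e^{-\psi(T-t)} - e^{-\phi(T-t)}}{\phi-\psi}$. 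Multiplying by $e^{-\rho(T-t)}$ and integrating over $[t,\infty)$ gives $\frac{S_t}{\rho+\phi} + \frac{\phi\tilde{S}_t}{(\rho+\psi)(\rho+\phi)}$; substituting $\phi\tilde{S}_t = \mathcal{N}_t + \phi S_t$ and collecting the $S_t$ terms recovers the right-hand side above.

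I expect the main obstacle to be precisely this last identity: carrying the coupled linear ODE through to a clean closed form, performing the two elementary integrals, and then regrouping algebraically so that the deterministic limiting coefficients on $S_t$ and $\mathcal{N}_t$ reassemble into a single discounted expectation. Alongside this I would have to justify the interchange of expectation and integral, which is immediate here since the conditional means $\mathbb{E}_t[S_T]$ and $\mathbb{E}_t[\tilde{S}_T]$ are bounded in $T$ (the system mean-reverts) and $e^{-\rho(T-t)}$ is integrable, so Fubini applies. The apparent singularity at $\phi = \psi$ is only in the intermediate formula and cancels; the final identity is continuous in $\phi$, so that degenerate case follows by taking the limit.
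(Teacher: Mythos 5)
Your proof is correct. There is in fact nothing in the paper to compare it against: Proposition 5.4 is stated and then only discussed informally (the role of $\theta$ in producing the discounted expectation), with no formal proof given, so your argument supplies exactly what the paper omits. The structural observation you lead with is the right one and is what makes the pathwise claim trivial: $(S_t,\tilde S_t,\mathcal{N}_t)$ solve the autonomous system (30)--(32) and $D_t$ solves (36), none of which involve $N$ or the market makers' strategies, so the competitive limit is purely a limit of deterministic coefficients, taken termwise at each fixed $(t,\omega)$. Your coefficient limits check out ($\delta\to\infty$, hence $\tfrac1\delta+\tfrac1\rho\to\tfrac1\rho$ and $\theta\to\tfrac{\rho+\psi+\phi}{(\rho+\psi)(\rho+\phi)}$, while $\tfrac{N-1}{N-2}\to1$ and $\tfrac{\gamma}{N}\to\gamma_0$), and the present-value identity is verified correctly: the conditional-mean ODE, the two exponential integrals, and the substitution $\phi\tilde S_t=\mathcal{N}_t+\phi S_t$ all regroup as you claim. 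One small patch is needed in the Fubini step: absolute integrability requires $\int_t^\infty e^{-\rho(T-t)}\,\mathbb{E}_t[|S_T|]\,dT<\infty$, and boundedness of the conditional \emph{mean} $\mathbb{E}_t[S_T]$ alone does not control $\mathbb{E}_t[|S_T|]$; you also need the conditional variance of $S_T$ to be bounded in $T$, which it is because $(S_T,\tilde S_T)$ is a stable linear diffusion, giving $\mathbb{E}_t[|S_T|]\le|\mathbb{E}_t[S_T]|+\mathrm{Var}_t(S_T)^{1/2}$ bounded in $T$. With that one-line addition the argument is complete, including your remark that the apparent $\phi=\psi$ singularity is removable.
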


That price impact does not vanish in the competitive limit is somewhat surprising. Naive intuition would suggest that market makers exercise their market power by charging price impact. So in the competitive limit, when individual market makers no longer have market power, price impact should vanish. The reason this intuition does not hold is given by the limiting value of the equilibrium price. 

In the competitive limit, market makers have an aggregate risk aversion of $\gamma_0 > 0$, so the equilibrium price should consist of a risk neutral present value as well as a discount based on the market makers' aggregate exposure. The first two terms in the limiting price, $D_t + \frac{\mu}{\rho}$, are a risk neutral present value as discussed above. Thus the third term $-\mathbb{E}_t\bigg[\int_t^\infty e^{-\rho(T-t)}\gamma_0\sigma_D^2S_TdT\bigg]$ should be the appropriate risk discount. 

As discussed above, $S_T$ is the aggregate exposure of the market makers at time $T$, and $-\gamma_0\sigma_D^2S_T$ is the dollar compensation they require in order to maintain this exposure. However, because $S_T$ evolves with $T$, this is only the exposure over the infinitesimal time interval $[T, T+dt]$. Furthermore, standing at time $t$, market makers require a compensation for the entire path of exposures they will be taking over $[t, \infty)$. The appropriate compensation for the exposure over $[T, T+dt]$ is $\gamma_0\sigma_D^2S_T$, and market makers discount payoffs from time $T$ to time $t$ by $e^{-\rho(T-t)}$, so the appropriate compensation for the entire path of exposures over $[t, \infty)$ is $-\mathbb{E}_t\bigg[\int_t^\infty e^{-\rho(T-t)}\gamma_0\sigma_D^2S_TdT\bigg]$. The presence of the parameter $\theta$ in Theorem 3.5 ensures that this expected present value arises in the limit. Thus $\theta$ should be thought of as inflating/deflating the standard risk premium so as to account for the expected future path of flow.

Said another way, the third term is exactly what is needed to prevent market makers deviating from equilibrium by pursuing a strategy that buys/sells based on the current level of the aggregate exposure. If $S_T$ does not evolve with $T$ and has a constant value equal to $S$, then this term reads $-\frac{1}{\rho}\gamma_0\sigma_D^2S$. This is exactly the risk discount a representative CARA investor with risk aversion $\gamma_0$ and time discount rate $\rho$ would require to hold $S$ shares of an asset with volatility $\sigma_D^2$. 

Based on this analysis, we can conclude, as suggested previously, that $-\frac{1}{\rho}\frac{\gamma\sigma_D^2}{N}S_t$ is not the full risk compensation component of the equilibrium price, but instead that $-\mathbb{E}_t\bigg[\int_t^\infty e^{-\rho(T-t)}\gamma_0\sigma_D^2S_TdT\bigg]$ is. Thus part of the price impact component of prices compensates market makers for risk, and so it does not vanish in the competitive limit. The rest of the price impact component of prices is a manifestation of market makers' market power, and so it does vanish in the competitive limit. The interpretation is that market makers charge price impact for (at least) two reasons. The first reason is simply because they can, since they have market power, and doing so is profitable. The second reason is that incoming trades change the entire path of exposures that market makers will be taking going forward. In order for the market makers to find it utility maximizing to clear the incoming trade, this must be reflected in the trading price.

\subsection{Inventory Analysis}
Next consider the equilibrium trading rates given in Theorem 3.5. The two terms encode two important properties of the market makers' equilibrium trading behavior. The first is that in aggregate the market makers must buy at rate $\mathcal{N}_t$, or equivalently the average trading rate of the market makers must be $\frac{1}{N}\mathcal{N}_t$. This is simply a consequence of market clearing, since the liquidity traders are selling at rate $\mathcal{N}_t$. That this is indeed the case is guaranteed by the second term in the formula for trading rates; the first terms all cancel out when aggregating/averaging. 

The first term in the trading rates dictates which market makers buy more/less than the average. Note that $\kappa > 0$, so the market makers with inventories larger than $\frac{S_t}{N}$ buy less, and vice versa. This brings us to the second important property of the market makers' trading behavior: they are continually moving towards a Pareto optimal allocation amongst themselves. The market makers must in aggregate hold $S_t$ shares, simply by market clearing. Since they are all identical, it would be Pareto optimal for everyone to hold $\frac{S_t}{N}$ shares. However, this certainly can't hold at time $0$, as initial inventories are exogenous and arbitrary. Beyond time $0$ efficiency of the allocations depends on endogenous trading behavior. 

Using the formula for trading rates, we can compute that in equilibrium the trajectory of each market maker's inventory is 
\begin{equation*}
X^n_t = e^{-\kappa t}(X^n_0 - \frac{S_0}{N}) + \frac{S_t}{N}.
\end{equation*}
Thus each market maker deviates from the efficient allocation by the first term. This term is non-zero if and only if $X^n_0 \neq \frac{S_0}{N}$, and in this case it converges monotonically towards $0$ over time. Thus the market makers take as given the inefficiency in their initial allocations, and then trade amongst themselves to make allocations more efficient. Allocations are inefficient at time $t$ is because they were inefficient at time $0$; the endogenous trading behavior of the market makers does not in any way create allocational inefficiencies.

But the market makers' trading behavior also does not perfectly remove the initial inefficiency in allocations. Indeed, inventories converge to efficiency at an exponential rate of $\kappa$, and this rate is not infinite. Based on the formula for $\kappa$, we see that the two main drivers of this rate are $\psi$, governing the uncertainty in order flow, and $N$, governing the degree of competition. As $\psi \to \infty$, so order flow uncertainty vanishes, or as $N \to \infty$, so the market is perfectly competitive, this rate of convergence goes to infinity. 

The story here is essentially one of individual market makers behaving strategically in an attempt to earn profits. Note that the price in the absence of liquidity trades ($\mathcal{N}_t = 0$), interpreted as the , is $D_t + \frac{\mu}{\rho} - \theta\frac{\gamma}{N}\sigma_D^2S_t$. Any market maker holding more than $\frac{S_t}{N}$ shares should find this price high relative to her exposure and should want to sell, and vice versa. However, market makers take into account price impact, so they know they can't actually trade to $\frac{S_t}{N}$ at this price. The market makers also know that at any moment a random liquidity trade might come in and move their exposure in the right direction, but now without price impact\footnote{More specifically, price impact \emph{favors} the market maker in this situation, as her limit order is getting hit by an incoming liquidity trader market order. If the market maker was insistent on moving to $\frac{S_t}{N}$, then she would have to place a market order, so price impact would go \emph{against} her.}. Thus instead of trading all the way to $\frac{S_t}{N}$, market makers only move partially and take the chance that the liquidity traders' flow will move them the rest of the way. In the absence of order flow uncertainty, or under perfect competition, the market makers no longer go through this calculation and simply trade all the way to $\frac{S_t}{N}$, converging instantly to efficiency (i.e. at an infinite rate). 

\section{Conclusion}
This paper presents a mathematical framework to model market making and to thereby generate an endogenous price impact function. The key insight is that price impact is the mechanism through which market makers earn profits while matching their books. The paper presents this idea in a stylized static model as well as in a rich continuous time model. The theory provides insights on the connection between market making and price impact, and as such is provides practical guidance on how to interpret measures of execution costs. 

\printbibliography[heading=bibintoc]

\section{Appendix: A Stochastic Control Lemma}
Fix a filtered probability space $(\Omega, \mathcal{F}, \{\mathcal{F}_t\}, \mathbb{P})$ equipped with a d-dimensional Brownian motion $\{B_t\}$ and satisfying the usual conditions. Consider a standard infinite horizon stochastic control problem with state space $\mathbb{R}^S$, control space $\mathbb{R}^A$, and randomness coming from $\{B_t\}$. Thus we take as given mappings $b: \mathbb{R}^S \times \mathbb{R}^A \to \mathbb{R}^S$, $\sigma: \mathbb{R}^S \times \mathbb{R}^A \to \mathbb{R}^{S \times d}$, and $f: \mathbb{R}^S \times \mathbb{R}^A \to \mathbb{R}$. We assume that $b$ and $\sigma$ are uniformly Lipschitz in their first variables and that $f$ is Borel measurable. Also fix a constant $\beta > 0$.

Denote by $\mathcal{A}$ the set of progressively measurable $\mathbb{R}^A$ valued processes $\alpha = \{\alpha_t\}$ such that 
\begin{equation*}
\mathbb{E}\bigg[\int_0^T |b(0, \alpha_t)^2 + |\sigma(0, \alpha_t)|^2dt\bigg] \ \ \ \ \ \forall T > 0.
\end{equation*}
Given $x \in \mathbb{R}^S$ and $\alpha = \{\alpha_t\} \in \mathcal{A}$, denote by $X^{x, \alpha} = \{X^{x, \alpha}_t\}$ the unique strong solution\footnote{The assumptions on $b$ and $\sigma$ and the definition of $\mathcal{A}$ were made precisely so that this equation admits a unique strong solution \cite{phamContinuoustimeStochasticControl2009}.} to the SDE
\begin{align*}
dX_t &= b(X_t, \alpha_t)dt + \sigma(X_t, \alpha_t)dB_t \\
X_0 &= x.
\end{align*}
For each $x \in \mathbb{R}^S$ denote by $\mathcal{A}(x)$ the subset of $\alpha = \{\alpha_t\} \in \mathcal{A}$ such that 
\begin{equation*}
\mathbb{E}\bigg[\int_0^\infty e^{-\beta t}|f(X^{x, \alpha}_t, \alpha_t)|dt\bigg] < \infty
\end{equation*}
and assume that $\mathcal{A}(x)$ is nonempty $\forall x \in \mathbb{R}^S$. 

For $x \in \mathbb{R}^S$ and $\alpha = \{\alpha_t\} \in \mathcal{A}(x)$ define the reward functional by
\begin{equation*}
J(x, \alpha) = \mathbb{E}\bigg[\int_0^\infty e^{-\beta t}f(X^{x, \alpha}_t, \alpha_t)dt\bigg].
\end{equation*}
The value function is defined for $x \in \mathbb{R}^S$ by 
\begin{equation*}
V(x) = \sup_{\alpha \in \mathcal{A}(x)} J(x, \alpha).
\end{equation*}
We also define the Hamiltonian $H: \mathbb{R}^S \times \mathbb{R}^S \times \mathbb{R}^{S \times S} \times \mathbb{R}^A \to \mathbb{R}$ by 
\begin{equation*}
H(x, y, z, a) = b(x, a) \cdot y + \frac{1}{2}trace\big(\sigma(x,a))\sigma^T(x,a)z\big) + f(x, a).
\end{equation*}

The significance of the Hamiltonian is that it describes how functions of a controlled state process evolve over time. That is, by Ito's formula, for any $\alpha = \{\alpha_t\} \in \mathcal{A}$, $h > 0$ and $g \in C^{1, 2}\big([0, \infty) \times \mathbb{R}^S\big)$ we have that 
\begin{align*}
g(t+h, X_{t+h}^{x, \alpha}) - g(t, X_t^{x, \alpha}) = &\int_t^{t+h} \frac{\partial g}{\partial t}(s, X_s^{x, \alpha}) + H\big(X_s^{x, \alpha}, \nabla g(s, X_s^{x, \alpha}), \nabla^2g(s, X_s^{x, \alpha}), \alpha_s\big) \\& - f(X_s^{x, \alpha}, \alpha_s)\ ds 
+ Martingale. 
\end{align*}

\begin{defn}
We say that $a: \mathbb{R}^S \to \mathbb{R}^A$ is an \textbf{optimal Markov control} if $\forall x \in \mathbb{R}^S$, $\exists \hat{\alpha}^x = \{\hat{\alpha}^x_t\} \in \argmax_{\alpha \in \mathcal{A}(x)} J(x, \alpha)$ such that $\hat{\alpha}^x_t = a\big(X^{x, \hat{\alpha}^x}_t\big)$ $\forall t \geq 0$ almost surely.  
\end{defn}

\begin{lem}
Suppose that $V \in C^2(\mathbb{R}^S)$ and $H$ is continuous. If $a: \mathbb{R}^S \to \mathbb{R}^A$ is continuous and provides an optimal Markov control then 
\begin{equation*}
a(x) \in \argmax_{\tilde{a} \in \mathbb{R}^A}  H(x, \nabla V(x), \nabla^2 V(x), \tilde{a}) \ \ \ \ \ \forall x \in \mathbb{R}^S.
\end{equation*}
\end{lem}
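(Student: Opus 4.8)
The plan is to argue by contradiction using a local, short-time perturbation of the optimal Markov feedback. The starting point is the It\^o identity obtained by feeding $g(t,x) = e^{-\beta t}V(x)$ into the displayed formula above; since $H$ is affine in its $y,z$ slots, one gets for any admissible $\alpha \in \mathcal{A}(x)$ and any stopping time $\tau$ (for which the local martingale is a genuine martingale, e.g. when the state stays in a compact set on $[0,\tau]$)
\[
\mathbb{E}\big[e^{-\beta\tau}V(X^{x,\alpha}_\tau)\big] - V(x) = \mathbb{E}\bigg[\int_0^\tau e^{-\beta s}\Big(H\big(X^{x,\alpha}_s, \nabla V, \nabla^2 V, \alpha_s\big) - \beta V(X^{x,\alpha}_s) - f(X^{x,\alpha}_s,\alpha_s)\Big)\,ds\bigg].
\]

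First I would extract a pointwise ``equality at $a$''. The flow property of the optimal Markov control — that running the feedback $a$ from any state is again optimal, so that $V(x) = \mathbb{E}[\int_0^\tau e^{-\beta s} f(\hat X_s, a(\hat X_s))\,ds + e^{-\beta\tau}V(\hat X_\tau)]$ along the optimal trajectory $\hat X := X^{x,\hat\alpha^x}$ — follows from the time-homogeneity of the discounted infinite-horizon problem together with optimality. Substituting this relation into the identity cancels the $f$-terms and leaves $0 = \mathbb{E}[\int_0^\tau e^{-\beta s}(H(\hat X_s, \nabla V, \nabla^2 V, a(\hat X_s)) - \beta V(\hat X_s))\,ds]$. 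Taking $\tau=h$ deterministic, dividing by $h$, and letting $h \downarrow 0$ (the integrand is continuous in $s$ near $0$) then yields the relation $\beta V(x) = H(x, \nabla V(x), \nabla^2 V(x), a(x))$ for every $x \in \mathbb{R}^S$.

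Now suppose the conclusion fails, so there are $x_0$ and $\tilde a \in \mathbb{R}^A$ with $H(x_0, \nabla V(x_0), \nabla^2 V(x_0), \tilde a) > H(x_0, \nabla V(x_0), \nabla^2 V(x_0), a(x_0))$. Because $V \in C^2$, $H$ is continuous, and — crucially — $a$ is continuous, both $x \mapsto H(x, \nabla V(x), \nabla^2 V(x), \tilde a)$ and $x \mapsto H(x, \nabla V(x), \nabla^2 V(x), a(x))$ are continuous, so there is a bounded ball $B \ni x_0$ and $\epsilon > 0$ with $H(x, \nabla V(x), \nabla^2 V(x), \tilde a) - H(x, \nabla V(x), \nabla^2 V(x), a(x)) > \epsilon$ on $B$. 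I would then splice controls: apply the constant $\tilde a$ up to $\tau_h := \tau_B \wedge h$ ($\tau_B$ the exit time of $B$) and thereafter revert to the feedback $a$, whose continuation value is $V(X_{\tau_h})$ by the flow property, giving $J(x_0, \alpha^\epsilon) = \mathbb{E}[\int_0^{\tau_h} e^{-\beta s} f(X_s, \tilde a)\,ds + e^{-\beta\tau_h}V(X_{\tau_h})]$. Applying the It\^o identity on $[0,\tau_h]$ with the constant control $\tilde a$ (legitimate since $X$ stays in $\bar B$, compact, where the coefficients and $\nabla V, \nabla^2 V$ are bounded), the explicit running cost in $J$ cancels the $f$-term from the identity, and inserting the pointwise relation $\beta V(X_s) = H(X_s, \nabla V, \nabla^2 V, a(X_s))$ gives
\[
J(x_0, \alpha^\epsilon) - V(x_0) = \mathbb{E}\bigg[\int_0^{\tau_h} e^{-\beta s}\Big(H(X_s, \nabla V, \nabla^2 V, \tilde a) - H(X_s, \nabla V, \nabla^2 V, a(X_s))\Big)\,ds\bigg] \geq \epsilon\,\mathbb{E}\bigg[\int_0^{\tau_h} e^{-\beta s}\,ds\bigg] > 0,
\]
since $\tau_h > 0$ almost surely. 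This contradicts $V(x_0) = \sup_{\alpha \in \mathcal{A}(x_0)} J(x_0, \alpha)$, so no such $\tilde a$ exists and the conclusion follows.

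The main obstacle is the flow (dynamic programming) property of the optimal Markov control, invoked twice above: that continuation under the feedback $a$ from an arbitrary state — possibly one reached at a stopping time — is again optimal and hence carries value $V$. This is exactly where time-homogeneity and the Markov structure are used, and it is delicate because $a$ is only assumed continuous, so the closed-loop SDE need not admit a unique strong solution; the hypothesis that $a$ provides an optimal Markov control (supplying, for every starting point, an optimal $\hat\alpha^x$ generated by $a$) is what licenses the pasting of trajectories. A secondary, routine point is verifying admissibility and integrability of the spliced control $\alpha^\epsilon$, which is immediate since the perturbation is confined to a compact region over a bounded time before reverting to the admissible optimal feedback.
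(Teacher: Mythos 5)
Your proof is correct, and it is organized differently from the paper's. You factor the argument through an intermediate pointwise identity, $\beta V(x) = H\big(x,\nabla V(x),\nabla^2 V(x),a(x)\big)$ for all $x$, obtained from the martingale optimality principle plus It\^o and an $h \downarrow 0$ limit; the contradiction step is then a one-process perturbation (constant control $\tilde a$ on $[0,\tau_B\wedge h]$, revert to the feedback) whose excess value over $V(x_0)$ is bounded below by $\epsilon\,\mathbb{E}\big[\int_0^{\tau_h}e^{-\beta s}ds\big] > 0$. The paper never establishes that identity: it runs \emph{both} the closed-loop process $X$ and the constant-control process $\tilde X$ from $x_0$ simultaneously, localizes the pair in a product of three small neighborhoods, transfers the Hamiltonian gap from $X$ to $\tilde X$ at the cost of two extra $\epsilon$'s (continuity of $V$ and of $H(\cdot,\tilde a_0)$), and closes the loop entirely in expectation, reaching $V(x_0) \le V(x_0) - \frac{\epsilon}{\beta}\mathbb{E}[1-e^{-\beta\tau}]$ and hence $\mathbb{E}[e^{-\beta\tau}]\ge 1$, contradicting $\tau>0$ a.s. What each buys: your route is the more modular and standard one (the identity is of independent interest, being the statement that $V$ solves the HJB equation with equality along the optimal feedback, and the final contradiction is immediate), at the cost of the limiting argument $h\downarrow 0$ with bounded convergence; the paper's route avoids any limit in $h$, but pays with the two-process bookkeeping. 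Both proofs lean on exactly the same nontrivial external inputs — the dynamic programming principle at stopping times and the optimality of the Markov control (your ``flow property''), plus It\^o with localization — so neither is more elementary in substance. One point to tighten in your write-up: in the first stage you say ``taking $\tau = h$ deterministic,'' but to justify both the martingale property of the stochastic integral and the bounded convergence you should take $\tau = h \wedge \tau_K$ with $\tau_K$ the exit time of a compact neighborhood (you flag this caveat yourself earlier, so this is a presentational fix, not a gap).
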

\begin{proof}
We proceed by contradiction. Suppose that the conclusion of the theorem does not hold. Then $\exists x_0 \in \mathbb{R}^S$, $\tilde{a}_0 \in \mathbb{R}^A$ and $\epsilon > 0$ such that 
\begin{equation*}
 H(x_0, \nabla V(x_0), \nabla^2 V(x_0), a(x_0)) < H(x_0, \nabla V(x_0), \nabla^2 V(x_0), \tilde{a}_0) - 4\epsilon.
\end{equation*}
By continuity of $a$, $H$ and $V$ it follows that $\exists$ a neighborhood $U_1$ of $x_0$ such that
\begin{equation}
H(x, \nabla V(x), \nabla^2 V(x), a(x)) < H(x, \nabla V(x), \nabla^2 V(x), \tilde{a}_0) - 3\epsilon \ \ \ \ \ \forall x \in U_1.
\end{equation}
Let $\{X_t\}$ be the unique strong solution to the SDE
\begin{align*}
dX_t &= b(X_t, a(X_t))dt + \sigma(X_t, a(X_t))dBt \\
X_0 &= x_0
\end{align*}
and let $\{\tilde{X}_t\}$ be the unique strong solution to the SDE 
\begin{align*}
d\tilde{X}_t &= b(\tilde{X}_t, \tilde{a}_0)dt + \sigma(\tilde{X}_t, \tilde{a}_0)dBt \\
\tilde{X}_0 &= x_0.
\end{align*}
Note that by smoothness of $V$ and continuity of $H$ we can find neighborhoods $U_2$ and $U_3$ of $x_0$ such that 
\begin{align}
|V(x) - V(y)| < \frac{\epsilon}{\beta} \ \ \ \ \ \forall x, y \in U_2 \\
| H(x, \nabla V(x), \nabla^2 V(x), \tilde{a}_0) -  H(y, \nabla V(y), \nabla^2 V(y), \tilde{a}_0)| < \epsilon \ \ \ \ \ \forall x, y \in U_3.
\end{align}
Define the stopping time $\tau = \inf\{t \geq 0: (X_t, \tilde{X}_t) \not\in U_1 \cap U_2 \cap U_3 \times U_1 \cap U_2 \cap U_3\}$. Since the processes $\{X_t\}$ and $\{\tilde{X}_t\}$ have a.s. continuous paths it follows that 
\begin{equation}
\tau > 0  
\end{equation}
almost surely. 

Now, we can estimate
\begin{align*}
V(x_0) &= \mathbb{E} \bigg[\int_0^\tau e^{-\beta t} f(X_t, a(X_t))dt + e^{-\beta\tau}V(X_\tau)\bigg] \\
&= V(x_0) + \mathbb{E} \bigg[\int_0^\tau e^{-\beta t}\Big(H\big(X_t, \nabla V(X_t), \nabla^2 V(X_t), a(X_t)\big) - \beta V(X_t)\Big)dt\bigg] \\
&\leq  V(x_0) + \mathbb{E} \bigg[\int_0^\tau e^{-\beta t}\Big(H\big(X_t, \nabla V(X_t), \nabla^2 V(X_t), \tilde{a}_0\big) - 3\epsilon - \beta V(X_t)
\Big)dt\bigg] \\
&\leq  V(x_0) + \mathbb{E} \bigg[\int_0^\tau e^{-\beta t}\Big(H\big(\tilde{X}_t, \nabla V(\tilde{X}_t), \nabla^2 V(\tilde{X}_t), \tilde{a}_0\big) + \epsilon - 3\epsilon - \beta V(\tilde{X}_t) + \epsilon\Big)dt\bigg] \\
&=  \mathbb{E} \bigg[\int_0^\tau e^{-\beta t} f(\tilde{X}_t, \tilde{a}_0)dt + e^{-\beta\tau}V(\tilde{X}_\tau)\bigg] - \frac{\epsilon}{\beta}\mathbb{E}[1 - e^{-\beta\tau}] \\
&\leq V(x_0) - \frac{\epsilon}{\beta}\mathbb{E}[1 - e^{-\beta\tau}].
\end{align*}
The first equality uses the dynamic programing principle \cite{phamContinuoustimeStochasticControl2009} and the optimality of the Markov control given by $a$. The second and last equality use Ito's lemma. The last inequality uses the dynamic programming principle. The inequalities in the middle follow from the definition of $\tau$ and inequalities (36) - (38). Since $\frac{\epsilon}{\beta} > 0$, it follows from this estimate that $\mathbb{E}[e^{-\beta\tau}] \geq 1$. However, this contradicts (39).
\end{proof}

\end{document}